\documentclass[journal]{IEEEtran}
\usepackage[font={small}]{caption}
\usepackage{epsfig,amsmath,amssymb,epsf,amsthm,scalefnt,multirow,subfig,dsfont}
\usepackage{xcolor}
\usepackage{float}
\usepackage{cite}
\usepackage{psfrag}
\usepackage{mathtools}
\usepackage{lipsum}
\usepackage{cuted}
\usepackage{multicol}
\usepackage{amsmath,centernot}

\newcommand{\nll}{\centernot{\ll}}

\newtheorem{theorem}{Theorem}
\newtheorem{remark}{Remark}
\newtheorem{lemma}{Lemma}
\newtheorem*{lemma*}{Lemma}

\newtheorem{definition}{Definition}





\def \supp{\operatorname{supp}}

\def\b0{{\pmb{0}}} 




\allowdisplaybreaks[4]
\begin{document}

\title{Treating Interference as Noise is Optimal for Covert Communication over Interference Channels}

 \author{\IEEEauthorblockN{Kang-Hee Cho and Si-Hyeon Lee}\\
\IEEEauthorblockA{School of Electrical Engineering, KAIST,  South Korea \\
E-mail:kanghee@kaist.ac.kr, sihyeon@kaist.ac.kr
}\thanks{This paper was submitted to IEEE Transactions on Information Forensics and Security, and a shorter version of this paper was submitted to IEEE ISIT 2020 \cite{cholee_ISIT20}.}}

\maketitle

\begin{abstract}
We study the covert communication over $K$-user discrete memoryless interference channels (DM-ICs) with a warden. It is assumed that the warden's channel output distribution induced by $K$ ``off" input symbols, which are sent when no communication occurs, is not a convex combination of those induced by any other combination of input symbols (otherwise, the square-root law does not hold). We derive the exact covert capacity region and show that a simple point-to-point based scheme with treating interference as noise is optimal. In addition, we analyze the secret key length required for the reliable and covert communication with the desired rates, and present a channel condition where a secret key between each user pair is unnecessary. The results are extended to the Gaussian case and the case with multiple wardens. 

\end{abstract}
\begin{IEEEkeywords}
Covert communication, low probability of detection, interference channel, treating interference as noise, resolvability.
\end{IEEEkeywords}

\section{Introduction}\label{sec:intro}
The  covert communication or communication with \textit{low probability of detection} aims to ensure a reliable communication between legitimate parties while keeping the presence of the communication secret from the warden. This setup is applicable to the military situation where several units of allies want to communicate each other without being detected by the enemy (i.e., the warden). The fundamental limits of covert communications have been actively studied mainly for point-to-point (p-to-p) channels such as additive white Gaussian noise (AWGN) channels \cite{Bash:13, Wang:16}, discrete memoryless channels (DMCs) \cite{Bloch:16, Wang:16},  low-complexity coding scheme based on the pulse-position modulation \cite{PPM1, PPM2},  channels using multiple antennas \cite{MIMOAWGN}, and channels with some uncertainty of statistics \cite{csit, noncoherentjournal}. In most interesting cases, the covertness constraint restricts the number of no ``off" input symbols (for discrete channel cases \cite{Bloch:16}) or the transmit power (for continuous channel cases \cite{Bash:13, Wang:16}) that leads the so-called square-root law, i.e., the maximum number of bits that can be communicated reliably and covertly over $n$ channel uses scales proportionally to $\sqrt{n}$.  

Recently, this line of research has been extended to various network scenarios such as multiple access channels (MACs) \cite{MAC}, broadcast channels (BCs) \cite{BC}, relay channels (RCs) \cite{covertrelay, helprelay, fullrelay}, and wireless adhoc networks \cite{adhoc}. It turns out that in some canonical models, the covertness constraint affects optimal strategies and/or the form of capacity region. For the DM-MAC with a warden \cite{MAC}, it is shown that the capacity region has no sum-rate bound and time-sharing is not needed to achieve the capacity region, both in contrast to the case without a warden \cite{Liao}. For the DM-BC with a warden \cite{BC},  a simple time-division approach is shown to be optimal over some channels satisfying a certain condition that contains a broad class of channels where the capacity region is not known without a warden \cite{CoverBC, Gallager:74}. 
 
In this paper, we consider another important network scenario, the $K$-user discrete memoryless interference channel (DM-IC) with a warden. The warden monitors its channel outputs through a DM-MAC. We assume that there is an ``off" input symbol at each transmitter (Tx) that is sent when no communication occurs. Then, we focus on the case that the warden's output distribution induced by $K$ ``off" input symbols is not a convex combination of some other output distributions at the warden; otherwise, the square-root law does not hold. In the absence of the covertness constraint, the capacity region of DM-ICs is not known in general except some special cases e.g., strong ICs \cite{strong} and injective deterministic ICs \cite{deterministic}. In addition, to obtain the best known inner bound (Han-Kobayashi inner bound \cite{Han-Kobayashi}), somewhat complicated coding strategies such as rate-splitting and superposition coding are utilized. In the presence of the covertness constraint, we derive the exact covert capacity region of the $K$-user DM-IC. Interestingly, an optimal strategy is  shown to be p-to-p-based scheme with treating interference as noise (TIN). We also analyze the secret key length required for the reliable and covert communication with the desired rates by using channel resolvability approach \cite{Han-Verdu, spectrum, Bloch:16, MAC}, and  derive the channel condition where a secret key is not required to be shared between each user pair. 

For brevity of the presentation, we first consider the binary input (BI) DM-IC in Section \ref{sec:problem} to Section \ref{sec:proof}. For the BI DM-IC with a warden, we fomulate the problem in Section \ref{sec:problem} and present the covert capacity region  in Section \ref{sec:results}, which is proved in Section \ref{sec:proof}. The results are extended to the non-binary input case, to the Gaussian channels, and to the channels with $J$ wardens  in Section \ref{sec:exten}. Finally, we conclude our paper in Section \ref{sec:concl}.

\textbf{Notation}: The notation for this paper is summarized as follows. To represent random variables and their realizations, we use upper case (e.g., $X$) and lower case (e.g., $x$), respectively. For length $n$ random vectors related to a random variable, we use boldface (e.g., $\mathbf{X}$ and $\mathbf{x}$). We define the set $\mathcal{K} \coloneqq [1:K]:=\{1,\cdots, K\}$ for a positive integer $K \geq 2$. For $\mathcal{U} \subseteq \mathcal{K}$, we denote the vector $\{X_k: k \in \mathcal{U}\}$ as $X_{\mathcal{U}}$, and the cartesian product $\times _{k \in \mathcal{U}} \mathcal{X}_k$ as $\mathcal{X}_{\mathcal{U}}$. We denote entropy function of a random variable $X$ as $H(X)$ and differential entropy function as $h(X)$. Relative entropy and variational distance are denoted as $D(P \| Q) \coloneqq \sum_x P(x) \log \frac{P(x)}{Q(x)}$ and $\mathbb{V}(P,Q) \coloneqq \frac{1}{2}\sum_x | P(x) - Q(x) |$, respectively. Mutual information of $(X,Y) \sim P \times W$ is denoted by $I(X;Y)$ and $I(P,W)$. We define $[x]^+ \coloneqq \max (x, 0)$, and $\mathds{1}\{\cdot\}$ is the indicator function. 
We denote the support of the probability distribution $P$ by $\supp(P)$.
For two probability mass functions $P$ and $Q$ that are defined on the same alphabet $\mathcal{Z}$, we write $P \ll Q$ if $P$ is absolutely continuous with respect to $Q$, i.e., $Q(z) = 0$ implies $P(z) = 0$ for all $z \in \mathcal{Z}$.

\section{Problem Formulation}\label{sec:problem}
\begin{figure}
 \centering
  {
  \includegraphics[width=88mm]{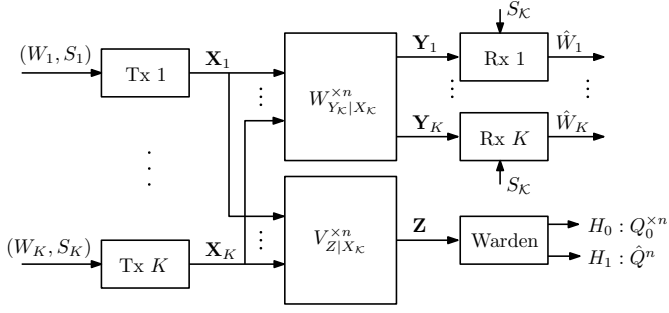}}
  \caption{A $K$-user DM-IC with a warden} \label{fig:channel}
\end{figure}
Consider a covert communication scenario over a $K$-user DM-IC with a warden depicted in Fig. \ref{fig:channel}. Through a DM-IC $(\mathcal{X}_{\mathcal{K}}, W_{Y_{\mathcal{K}}|X_{\mathcal{K}}}, \mathcal{Y}_{\mathcal{K}})$ that consists of $K$ channel input alphabets $\mathcal{X}_{\mathcal{K}}$, a channel transition matrix $W_{Y_{\mathcal{K}}|X_{\mathcal{K}}}$, and $K$ channel output alphabets $\mathcal{Y}_{\mathcal{K}}$, each user pair $k$ wants to communicate the message $W_{k}$ reliably, while keeping the presence of the communication secret from the warden who observes its channel outputs through a DM-MAC $(\mathcal{X}_{\mathcal{K}}, V_{Z|X_{\mathcal{K}}}, \mathcal{Z})$ where $V_{Z|X_{\mathcal{K}}}$ is the channel transition matrix and $\mathcal{Z}$ is the channel output alphabet at the warden. 
For brevity, we let $\mathcal{X}_k = \mathcal{X} = \{ 0,1 \}$ for all $k \in \mathcal{K}$. 
We also let $0 \in \mathcal{X}$ be the ``off'' input symbol that is sent when no communication occurs.
The marginal channel at receiver (Rx) $k$ is denoted as $W_{Y_k|X_{\mathcal{K}}}$.
In addition, we denote a channel submatrix $W_{Y_k|X_{\mathcal{U}} =  x_{\mathcal{U}}, X_{\mathcal{K}\backslash \mathcal{U}}}$ (i.e., the marginal channel $W_{Y_k|X_{\mathcal{K}}}$ when $X_{\mathcal{U}}$ is fixed to $x_{\mathcal{U}}$) as $W_{Y_k|x_{\mathcal{U}}, X_{\mathcal{K}\backslash \mathcal{U}}}$ for brevity.
We denote by  $b(\mathcal{U}) \in \mathcal{X}_{\mathcal{K}}$ for $\mathcal{U} \subseteq \mathcal{K}$  the length-$K$ binary vector where the $k^{\mathrm{th}}$ component is $\mathds{1}\{k \in \mathcal{U} \}$.
For notational convenience, we define $W_{\mathcal{U}}^{(k)}(y) \coloneqq W_{Y_k|X_{\mathcal{K}}}(y|b(\mathcal{U}))$ and $Q_{\mathcal{U}}(z) \coloneqq V_{Z|X_{\mathcal{K}}}(z|b(\mathcal{U}))$. 
If $\mathcal{U} = \emptyset$, i.e., no communication takes place, we write $W_{0}^{(k)}$ and $Q_0$, and if  $\mathcal{U} = \{i\}$ for $i\in \mathcal{K}$, we write $W_{i}^{(k)}$ and $Q_i$ for brevity.

In the following, we define a sequence of codes for our covert communication setting.
\begin{definition}
An $(M_{\mathcal{K}}, J_{\mathcal{K}}, n)$ code for the $K$-user DM-IC with a warden consists of
\begin{itemize}
\item $K$ message sets $[1 : M_k]$ for $k \in \mathcal{K}$;
\item $K$ secret key sets $[1 : J_k]$ for $k \in \mathcal{K}$;
\item $K$ Txs $\mathbf{x}_k: [1 : M_k] \rightarrow \mathcal{X}^n$ for $k \in \mathcal{K}$, where each Tx $k$ encodes message-key pair $(w_k, s_k) \in [1 : M_k] \times [1 : J_k]$ as a length-$n$ codeword $\mathbf{x}_k(w_k, s_k)$;
\item $K$ Rxs $\hat{w}_k : \mathcal{Y}_k^n \times (\times_{k \in \mathcal{K}}[1 : J_k]) \rightarrow [1 : M_k]$ for $k \in \mathcal{K}$, where each Rx $k$ estimates the message as $\hat{w}_k$ based on its channel outputs $\mathbf{y}_k$ and the secret keys $s_\mathcal{K}$.    
\end{itemize}
\end{definition}
Each message-key pair $(W_k, S_k)$ is uniformly distributed over $[1 : M_k] \times [1 : J_k]$. The probability of decoding error is defined as $P_e^n \coloneqq \Pr \left( \bigcup_{k=1}^K \{ \hat{W}_k \neq W_k \} \right)$.

When the communication takes place, the warden observes its channel outputs $\mathbf{Z} \in \mathcal{Z}^n$ of which distribution is given as 
\begin{align}
\hat{Q}^n(\mathbf{z}) \coloneqq \frac{1}{\prod_{k \in \mathcal{K}} M_k J_k}\sum_{w_{\mathcal{K}}}\sum_{s_{\mathcal{K}}} V_{Z|X_{\mathcal{K}}}^{\times n} (\mathbf{z} | \mathbf{x}_{\mathcal{K}}(w_{\mathcal{K}},s_{\mathcal{K}})).
\end{align}
When no communication occurs, $\mathbf{Z}$ is distributed according to $Q_{0}^{\times n}$, the $n$-fold product distribution of $Q_{0}$. 
Based on the channel statistic and the channel outputs, the warden performs a hypothesis test to determine whether the communication takes place (hypothesis $H_1$) or not (hypothesis $H_0$). 
The covert communication requires that the sum of the probabilities of false alarm (accept $H_1$ when no communication occurs) $\pi_{1|0}$ and miss detection (accept $H_0$ when the communication occurs) $\pi_{0|1}$ is close to $1$ (corresponding to a blind test). The optimal hypothesis test of the warden satisfies
\begin{align} \label{eqn:opthyp}
\pi_{1|0} + \pi_{0|1} &= 1 - \mathbb{V}(\hat{Q}^n, Q_{0}^{\times n}) \\ \label{eqn:Pinsker}
&\geq 1 - \sqrt{D ( \hat{Q}^n \| Q_{0}^{\times n} )},
\end{align} 
where \eqref{eqn:opthyp} can be checked in \cite{hypo}, and \eqref{eqn:Pinsker} follows by Pinsker's inequality \cite{Cover:2006}. 
Thus, we set the covertness constraint as follows:
\begin{align} \label{eqn:covertness}
\lim_{n \rightarrow \infty} D ( \hat{Q}^n \| Q_{0}^{\times n} ) = 0.
\end{align}
We assume $Q_{\mathcal{U}} \ll Q_0$ for all $\mathcal{U} \subseteq \mathcal{K}$. It can be easily seen that if $Q_{\mathcal{U}} \nll Q_0$ for some $\mathcal{U} \subseteq \mathcal{K}$, the relative entropy in \eqref{eqn:covertness} is infinity when the Txs send symbols $b(\mathcal{U})$. 
We also assume $W_{\mathcal{U}}^{(k)} \ll W_{0}^{(k)}$ for all $k \in \mathcal{K}$ and for all $\mathcal{U} \subseteq \mathcal{K}$\footnote{Some comments on the scenario without this assumption are in Remark~\ref{rem:absolute}.}. Furthermore, we assume that $Q_0$ cannot be represented as any convex combination of $Q_{\mathcal{U}}$ for some $\mathcal{U} \subseteq \mathcal{K}$; otherwise, one can design a sequence of codes that the relative entropy in \eqref{eqn:covertness} is zero while achieving a positive rate (i.e., the square-root law does not hold).

The covert capacity region is formally defined in the following.
  \begin{definition}
We say that a tuple pair $ (R_{\mathcal{K}}, L_{\mathcal{K}}) \in \mathbb{R}_{+}^{2K}$ is achievable for the $K$-user DM-IC with a warden if there exists a sequence of codes satisfying the following:
\begin{align}
\liminf_{n \rightarrow \infty} \frac{\log M_k}{\sqrt{nD ( \hat{Q}^n \| Q_{0}^{\times n} )}} &\geq R_k, \quad \forall k \in \mathcal{K},
\end{align}
\begin{align}
\limsup_{n \rightarrow \infty} \frac{\log J_k}{\sqrt{nD ( \hat{Q}^n \| Q_{0}^{\times n} )}} \leq L_k, \quad \forall k \in \mathcal{K},
\end{align}
\begin{align} \label{eqn:error0}
\lim_{n \rightarrow \infty} P_e^n = 0, 
\end{align}
and
\begin{align} \label{eqn:covert0}
\lim_{n \rightarrow \infty} D ( \hat{Q}^n \| Q_{0}^{\times n} ) = 0.
\end{align}
The covert capacity region of the $K$-user DM-IC with a warden is defined as the closure of the set $ \{ R_{\mathcal{K}} \in \mathbb{R}_{+}^{K} : (R_{\mathcal{K}},L_{\mathcal{K}}) \mbox{ is achievable for some } L_{\mathcal{K}} \}$.
\end{definition}

\section{Main Results}\label{sec:results}
In this section, we present our main theorem on the covert capacity region of the $K$-user DM-IC with a warden. Furthermore, we provide a sufficient and necessary condition on the secret key length at the boundary of the covert capacity region from which we can obtain the channel condition where a secret key is not required to be shared. The proof of the main theorem is in Section \ref{sec:proof}.

\begin{theorem}\label{thm:region}
For the $K$-user DM-IC with a warden, the covert capacity region is the set of the rate tuple $R_{\mathcal{K}}$ satisfying 
\begin{align} \label{eqn:regionthm}
R_k \leq \frac{\alpha_k D(W_{k}^{(k)} \| W_{0}^{(k)})}{\sqrt{\chi^2(\boldsymbol{\alpha})/2}}, \quad \forall k \in \mathcal{K}
\end{align}
for some $\boldsymbol{\alpha} \in [0,1]^K$ such that $\sum_{k \in \mathcal{K}}\alpha_k = 1$, where $\chi^2(\boldsymbol{\alpha})$ is defined as
\begin{align}
\chi^2(\boldsymbol{\alpha}) \coloneqq \sum_z \frac{\left( \sum_{k \in \mathcal{K}} \alpha_k Q_k(z) - Q_0(z) \right)^2}{Q_0(z)}.
\end{align}
For $R_{\mathcal{K}}$ satisfying  \eqref{eqn:regionthm} with equalities, a sufficient and necessary condition on the tuple $L_{\mathcal{K}}$ for $(R_{\mathcal{K}}, L_{\mathcal{K}})$ to be achievable is 
\begin{align} \label{eqn:ratekeyrate}
L_k \geq \frac{\alpha_k [D(Q_{k} \| Q_{0}) - D(W_{k}^{(k)} \| W_{0}^{(k)})]^{+}}{\sqrt{\chi^2(\boldsymbol{\alpha})/2}}, \quad \forall k \in \mathcal{K}. 
\end{align}
Thus, if $D(Q_k\|Q_0) \leq D(W_{k}^{(k)} \| W_{0}^{(k)})$ (i.e., roughly the channel from Tx $k$ to the warden is worse than the channel from Tx $k$ to Rx $k$), a secret key between user pair $k$ is unnecessary.  
\end{theorem}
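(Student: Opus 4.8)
The plan is to establish the covert capacity region and the secret-key characterization separately, combining a channel-resolvability-based achievability argument with a converse that exploits the non-covertness of $Q_0$ as a convex combination. For achievability, I would use a simple point-to-point scheme at each Tx: user $k$ draws each symbol of its codeword i.i.d. according to a Bernoulli distribution with parameter $\rho_k = \alpha_k \rho / \sqrt{n}$ for an appropriate weight $\rho > 0$, and treats the interference from the other users as part of the channel noise. With high probability only $O(\sqrt{n})$ symbols are ``on'' per user. The key observation is that since the inputs are sparse and independent across users, the aggregate warden output distribution $\hat{Q}^n$ is, to second order in $1/\sqrt{n}$, governed by the mixture $\sum_k \rho_k (Q_k - Q_0)$, so that $D(\hat{Q}^n \| Q_0^{\times n}) \to \tfrac{\rho^2}{2}\chi^2(\boldsymbol{\alpha})$ after the usual second-order Taylor expansion of relative entropy (the first-order term vanishes because each $Q_k \ll Q_0$). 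Calibrating $\rho$ to meet the covertness budget $\delta_n := D(\hat{Q}^n\|Q_0^{\times n})$ then yields the ``on''-symbol count $\sqrt{2n\delta_n/\chi^2(\boldsymbol{\alpha})}\cdot\alpha_k$ for user $k$, and standard point-to-point covert coding over that sparse support delivers $\log M_k \approx \alpha_k \sqrt{2n\delta_n/\chi^2(\boldsymbol{\alpha})}\, D(W_k^{(k)}\|W_0^{(k)})$, which is \eqref{eqn:regionthm} with equality. The secret key is used, via a resolvability/likelihood-encoder argument \cite{Bloch:16, MAC}, to make $\hat{Q}^n$ indistinguishable from the intended single-user-induced output; its required rate is the excess of the warden-side divergence $D(Q_k\|Q_0)$ over the receiver-side divergence $D(W_k^{(k)}\|W_0^{(k)})$, scaled by the same factor, giving \eqref{eqn:ratekeyrate}.

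For the converse, I would first show that any reliable covert code must, on average, use at most roughly $\sqrt{2n\delta_n/\chi^2(\boldsymbol{\alpha})}\,\alpha_k$ ``on'' symbols at Tx $k$ for some $\boldsymbol{\alpha}$ summing to $1$. This is where the assumption that $Q_0$ is not a convex combination of the $Q_\mathcal{U}$ is essential: it forces every deviation from the all-``off'' input to contribute strictly positively to $D(\hat{Q}^n\|Q_0^{\times n})$ at second order, so the covertness budget $\delta_n \to 0$ caps the total input weight; a convexity/Taylor argument identifies the quadratic form $\chi^2(\boldsymbol{\alpha})$ as the right normalization. Given the per-user on-symbol budget, a genie-aided argument that reveals the interfering codewords to Rx $k$ reduces the problem to a point-to-point covert channel with input weight $\alpha_k\sqrt{2n\delta_n/\chi^2(\boldsymbol{\alpha})}$, whose covert capacity is known \cite{Bloch:16, Wang:16} to be exactly $\alpha_k D(W_k^{(k)}\|W_0^{(k)})/\sqrt{\chi^2(\boldsymbol{\alpha})/2}$ after normalization by $\sqrt{n\delta_n}$; this gives the outer bound matching \eqref{eqn:regionthm}. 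The matching lower bound on $L_k$ at the boundary follows from a resolvability converse: to drive $D(\hat{Q}^n\|Q_0^{\times n})\to 0$ while sustaining rate $R_k$ on the boundary, the key rate must absorb exactly the gap $D(Q_k\|Q_0)-D(W_k^{(k)}\|W_0^{(k)})$ when positive.

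The final displayed claim — that a secret key for pair $k$ is unnecessary when $D(Q_k\|Q_0)\le D(W_k^{(k)}\|W_0^{(k)})$ — is then immediate: under this inequality the right-hand side of \eqref{eqn:ratekeyrate} is $[\text{negative}]^+ = 0$, so $L_k = 0$ is admissible, meaning $J_k = 1$ suffices and user pair $k$ needs no shared secret key to operate at the boundary rate (and hence, by monotonicity, at any rate below it).

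I expect the main obstacle to be the converse on the input weights: making rigorous the claim that the second-order expansion of $D(\hat{Q}^n\|Q_0^{\times n})$ is dominated by the quadratic form $\chi^2(\boldsymbol{\alpha})$ in the (normalized) empirical on-symbol frequencies, uniformly over all codes, requires carefully controlling the higher-order Taylor remainder and the cross terms arising from the joint distribution of the $K$ codewords at the warden — the independence of the codebooks helps, but one must still rule out adversarial correlations in the induced output statistics, and the strict-non-convexity assumption on $Q_0$ is what guarantees the quadratic form is positive definite on the simplex so that the bound does not degenerate.
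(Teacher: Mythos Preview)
Your proposal is correct and follows essentially the same route as the paper: sparse i.i.d.\ Bernoulli codebooks with treating-interference-as-noise for achievability, channel resolvability for the key-rate bound, and a converse that single-letterizes $D(\hat{Q}^n\|Q_0^{\times n})$, uses the non-convex-combination assumption to force the per-symbol ``on'' probabilities to vanish, then combines a genie-aided (conditioned on $X_{\{\mathcal{K}\setminus k\}i}$) Fano bound with the second-order $\chi^2$ lower bound via Cauchy--Schwarz. One small clarification: in the paper the non-convexity assumption is invoked not to establish positive definiteness of $\chi^2(\boldsymbol{\alpha})$ directly, but to conclude from $\hat{Q}_i\to Q_0$ that every $p_{ki}^{(n)}\to 0$, which is what licenses the Taylor expansion you describe; your anticipated obstacle about correlations is handled by the convexity step $D(\hat{Q}^n\|Q_0^{\times n})\ge\sum_i D(\hat{Q}_i\|Q_0)$, which removes the memory before the quadratic expansion is applied.
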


For the achievability, the codebook is randomly generated with a very low probability of sending symbol $1$ (approximately order of $1/\sqrt{n}$), and each Rx $k$ decodes the message by treating interference as noise. Thus, each secret key $S_k$ is required to be shared between only each user pair $k$.
In the following, a few remarks on Theorem \ref{thm:region} are in order. 
\begin{enumerate}
\item It turns out that the covertness requirement constrains the number of symbol $1$ at each Tx in a certain way. Roughly speaking, the vector $\boldsymbol{\alpha}$ represents how we allocate the number of symbol $1$ to each Tx, but the total number of symbol $1$ sent from all the Txs depends on the common factor $\chi^2(\boldsymbol{\alpha})$. The factor $\chi^2(\boldsymbol{\alpha})$ appears because the transmission from each Tx \emph{jointly} affects the covertness constraint. The factor  $\chi^2(\boldsymbol{\alpha})$ varies with $\boldsymbol{\alpha}$ in general because the channels from the each Tx to the warden are different (i.e., each Tx's symbol $1$ influences the dectectability of the warden differently).
 
\item The optimality of TIN can be explained as follows. For general ICs without a warden, every user cannot simultaneously achieve the maximally achievable individual rate because one user's transmission interferes the other users channel. However, for our model with a warden, since the influence on the warden's channel outputs is kept negligible by restricting the number of symbol $1$, the effect of the interference signals on each Rx is also negligible. Thus, every user can achieve the maximally achievable individual rate, given that a certain fraction $\boldsymbol{\alpha}$ of symbol $1$ is allocated to them. 

\item We remind that the factor  $\chi^2(\boldsymbol{\alpha})$ varies with $\boldsymbol{\alpha}$ in general. However, if the DM-MAC $(\mathcal{X}_{\mathcal{K}}, V_{Z|X_{\mathcal{K}}}, \mathcal{Z})$ is symmetric in the sense that $Q_k(z) = Q(z)$, $\forall k \in \mathcal{K}$ and $\forall z \in \mathcal{Z}$, $\chi^2(\boldsymbol{\alpha})$ is invariant in $\boldsymbol{\alpha}$. Hence, the time-division approach is optimal under this condition.   

\end{enumerate}

\begin{remark}\label{rem:absolute}
For p-to-p DMCs, where the output distributions at the Rx induced by symbols $0$ and $1$ are denoted by $P_0$ and $P_1$, respectively, the optimal covert communication for the case of  $P_1 \nll P_0$ is well-studied \cite[Appendix G-F]{Bloch:16}. Let $p$ denote the probability of sending symbol $1$. If $P_1 \nll P_0$, by utilizing the positions of the channel outputs that belong to $\supp(P_1) \backslash \supp(P_0)$,  the order of the optimal throughput over $n$ channel uses increases to approximately $\sqrt{n}\log n$. In this case, the optimal  $p$ turns out to be order of $1/n$. We remind that if $P_1 \ll P_0$, the square root law holds and the optimal $p$ is approximately order of $1/\sqrt{n}$. 

It is not straightforward to generalize the aforementioned result to our model. Consider $K=2$ and assume that $W_{1}^{(1)} \nll W_{0}^{(1)}$ and $\supp(W_{1}^{(1)}) = \supp(W_{2}^{(1)})$. Then, for Rx $1$, it is not clear to infer which Tx sends symbol $1$ by just observing a certain channel output, and thus some joint decoding scheme might be needed to utilize the advantage of $W_{1}^{(1)} \nll W_{0}^{(1)}$. Furthermore, it is not straightforward what the order of optimal input distributions should be, as a Tx input influences not only the channel output of its corresponding Rx, but also that of the other Rx. 

\end{remark} 

\section{Proof of Theorem \ref{thm:region}}\label{sec:proof}

Let us first define some probability distributions that will be used throughout this paper. We define $K$ Bernoulli distributions $\{ P_k\}_{k \in K}$ for $\gamma_n \in [0,1]$ and $\boldsymbol{\alpha} \in [0,1]^K$ such that $\sum_{k \in \mathcal{K}}\alpha_k = 1$  as follows:
\begin{align} 
P_{k}(x) \coloneqq 
\begin{cases}
1- \alpha_k \gamma_n & x = 0 \\
 \alpha_k \gamma_n & x = 1.
\end{cases} \label{eqPk}
\end{align}
Then, we define the channel output distributions  induced by the input distribution $P_k$ at each Tx $k$ as the following:
\begin{align} 
W_{\boldsymbol{\alpha},\gamma_n}^{(k)}(y_k) \coloneqq \sum_{x_{\mathcal{K}}} W_{Y_k | X_{\mathcal{K}}}(y_k | x_{\mathcal{K}}) \left( \prod_{k \in \mathcal{K}}P_k(x_k) \right), \\ \label{eqn:wardendist}
Q_{\boldsymbol{\alpha},\gamma_n}(z) \coloneqq \sum_{x_{\mathcal{K}}} V_{Z | X_{\mathcal{K}}}(z | x_{\mathcal{K}}) \left( \prod_{k \in \mathcal{K}}P_k(x_k) \right).
\end{align} 

\subsection{Relation to the DM-MAC with a Warden \cite{MAC}}
In our scenario, the warden observes its channel outputs through a DM-MAC. This channel structure from the Txs to the warden is same to that of the DM-MAC with a warden \cite{MAC}, and thus some results in  \cite{MAC} on the influence of the transmissions on the warden's induced channel outputs apply to our scenario. For brevity of the description, we omit the details of the proofs that are same to that of \cite{MAC}, but we provide some comments to help understanding.       

We introduce the following lemma \cite[Lemma 1]{MAC}, which is proved for the DM-MAC with a warden and also holds for our setting. This lemma presents an important result on how the number of symbol $1$ at each Tx affects the relative entropy and mutual information of interst, which is used to prove the achievability part.
\begin{lemma}[Arumugam-Bloch \cite{MAC}] \label{lem:basic}
Let $\{\gamma_n\}_{n \geq 1}$ be a sequence such that $\gamma_n \in [0,1]$ and $\lim_{n \rightarrow \infty}\gamma_n = 0$. Then, for sufficiently large $n$ and $\boldsymbol{\alpha} \in [0,1]^K$ such that $\sum_{k \in \mathcal{K}}\alpha_k = 1$, 
\begin{align} \label{eqn:keylemma}
\frac{\gamma_n^2}{2}(1 - \sqrt{\gamma_n})\chi^2(\boldsymbol{\alpha}) \leq D(Q_{\boldsymbol{\alpha},\gamma_n} \| Q_0) \nonumber \\\leq \frac{\gamma_n^2}{2}(1 + \sqrt{\gamma_n})\chi^2(\boldsymbol{\alpha}).
\end{align}
Futhermore, for any tuple $(X_{\mathcal{U}}, Z)$ for $\mathcal{U} \subseteq \mathcal{K}$ and $|\mathcal{U}| \neq 0$, with the joint distribution $V_{Z|X_{\mathcal{U}}}(\prod_{k \in \mathcal{U}}P_k)$, we have
\begin{align}
I(X_{\mathcal{U}} ; Z) = \sum_{k \in \mathcal{U}} \alpha_k \gamma_n D(Q_k \| Q_0) + O(\gamma_n^2).
\end{align}  
\end{lemma}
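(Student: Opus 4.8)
The plan is to prove both statements by a first-order Taylor expansion of the warden's induced output distribution $Q_{\boldsymbol{\alpha},\gamma_n}$ in the vanishing parameter $\gamma_n$, using the absolute-continuity assumption $Q_{\mathcal{U}} \ll Q_0$ and the finiteness of $\mathcal{Z}$ to keep every remainder uniformly controlled. First I would expand $Q_{\boldsymbol{\alpha},\gamma_n}(z)$ by grouping the $2^K$ input patterns $b(\mathcal{U})$ according to how many Txs send symbol $1$: the all-off pattern carries weight $\prod_{k}(1-\alpha_k\gamma_n)=1-\gamma_n+O(\gamma_n^2)$ and output $Q_0$; each single-$1$ pattern $\{k\}$ carries weight $\alpha_k\gamma_n+O(\gamma_n^2)$ and output $Q_k$; and all patterns with $|\mathcal{U}|\geq 2$ contribute total weight $O(\gamma_n^2)$. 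Since $\sum_{k}\alpha_k=1$, this yields $Q_{\boldsymbol{\alpha},\gamma_n}(z)-Q_0(z)=\gamma_n\Delta(z)+O(\gamma_n^2)$, where $\Delta(z):=\sum_{k}\alpha_k Q_k(z)-Q_0(z)$ is exactly the numerator defining $\chi^2(\boldsymbol{\alpha})$.

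For the divergence bound I would set $\delta(z):=Q_{\boldsymbol{\alpha},\gamma_n}(z)-Q_0(z)$ and expand $D(Q_{\boldsymbol{\alpha},\gamma_n}\|Q_0)=\sum_z (Q_0(z)+\delta(z))\log(1+\delta(z)/Q_0(z))$ via $\log(1+t)=t-t^2/2+O(t^3)$. The first-order sum vanishes because $\sum_z\delta(z)=0$, while the second-order contribution is $\tfrac12\sum_z \delta(z)^2/Q_0(z)=\tfrac{\gamma_n^2}{2}\chi^2(\boldsymbol{\alpha})+O(\gamma_n^3)$; absolute continuity guarantees every summand is well-defined, since $Q_0(z)=0$ forces $\Delta(z)=0$. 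Collecting all higher-order terms into a single remainder $R=O(\gamma_n^3)$, the claimed two-sided bound amounts to $|R|\leq \tfrac{\gamma_n^{5/2}}{2}\chi^2(\boldsymbol{\alpha})$, which holds once $\gamma_n$ is small enough that the relative error of order $\gamma_n$ is dominated by $\sqrt{\gamma_n}$.

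For the mutual-information expansion I would use the identity $I(X_{\mathcal{U}};Z)=\mathbb{E}_{X_{\mathcal{U}}}[D(V_{Z|X_{\mathcal{U}}}(\cdot|X_{\mathcal{U}})\|Q_0)]-D(Q_Z^{(\mathcal{U})}\|Q_0)$ taken relative to the reference $Q_0$, where $Q_Z^{(\mathcal{U})}$ is the output marginal under $\prod_{k\in\mathcal{U}}P_k$. The averaged-divergence term is handled by the same single-$1$ bookkeeping: the all-off input gives $D(Q_0\|Q_0)=0$, each single active Tx $k\in\mathcal{U}$ gives $\alpha_k\gamma_n D(Q_k\|Q_0)$, and inputs with two or more active Txs contribute $O(\gamma_n^2)$, so this term equals $\sum_{k\in\mathcal{U}}\alpha_k\gamma_n D(Q_k\|Q_0)+O(\gamma_n^2)$. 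The subtracted term $D(Q_Z^{(\mathcal{U})}\|Q_0)$ is $O(\gamma_n^2)$ by the divergence estimate just established applied to the sub-collection $\mathcal{U}$, so it is absorbed into the error and the expansion follows.

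The hard part will be making the remainder control genuinely uniform over $\boldsymbol{\alpha}$, as the lemma asserts for all $\boldsymbol{\alpha}$ on the simplex and for sufficiently large $n$. The clean factors $(1\pm\sqrt{\gamma_n})$ require $|R|\leq\tfrac{\gamma_n^{5/2}}{2}\chi^2(\boldsymbol{\alpha})$, which fails unless $\chi^2(\boldsymbol{\alpha})$ is bounded away from $0$; this is precisely where the standing assumption that $Q_0$ is not a convex combination of the $\{Q_{\mathcal{U}}\}$ enters, since it forces $\Delta\not\equiv 0$ and hence $\chi^2(\boldsymbol{\alpha})>0$ on the compact simplex $\{\boldsymbol{\alpha}:\sum_k\alpha_k=1\}$, giving a uniform positive lower bound. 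Tracking the implied constants in the $O(\gamma_n^3)$ remainder, arising from the products of Bernoulli weights and from the cubic tail of $\log(1+t)$, through the finite alphabet so that the threshold on $n$ is independent of $\boldsymbol{\alpha}$ is the most delicate bookkeeping, but it is routine once the leading-order structure above is fixed.
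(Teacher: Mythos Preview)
Your Taylor-expansion approach is correct and is essentially the argument used in \cite{MAC}, from which the paper imports this lemma verbatim without reproving it; there is no separate proof in the paper to compare against. Your identification of the role of the standing assumption (that $Q_0$ is not a nontrivial convex combination of the $Q_{\mathcal{U}}$) in forcing $\chi^2(\boldsymbol{\alpha})>0$ uniformly on the compact simplex, so that the $O(\gamma_n^3)$ remainder can be absorbed into the $\sqrt{\gamma_n}$ slack, is exactly the point that makes the two-sided bound hold as stated.
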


\subsection{Achievability}\label{subsec:achieve}
For the achievability, we use the TIN scheme based on the standard random coding argument. The achievable secret key length is analyzed based on the channel resolvability approach \cite{Han-Verdu, spectrum, Bloch:16, MAC}.  

\subsubsection{Treating Interference as Noise}\label{subsubsec:scheme}
Fix  $\boldsymbol{\alpha} \in [0,1]^K$ such that $\sum_{k \in \mathcal{K}}\alpha_k = 1$. Each Tx $k$ randomly generates $M_k J_k$ codewords $\mathbf{x}_k(w_k, s_k) \in \mathcal{X}^n$ for each $(w_k, s_k) \in [1:M_k] \times [1:J_k]$ according to the distribution $P_k$ defined in \eqref{eqPk} where $\gamma_n$ is  determined later.  Upon observing message $W_k$ and secret key $S_k$,  Tx $k$ sends $\mathbf{x}_k(W_k,S_k)$. Each secret key $S_k$ is shared between only user pair $k$. Define a jointly typical set $\mathcal{A}_{\tau_k}^n$ as the following:
\begin{align} \label{eqn:jtset}
\mathcal{A}_{\tau_k}^n \coloneqq \left \{ (\mathbf{x}, \mathbf{y}) \in \mathcal{X}^n \times \mathcal{Y}_k^n: \log \frac{\bar{W}^{(k) \times n}(\mathbf{y}|\mathbf{x})}{\bar{W}^{(k) \times n}(\mathbf{y}|\mathbf{0})} > \tau_k \right \},
\end{align}
where
\begin{align}
\bar{W}^{(k)}(y|x_k) \coloneqq \sum_{x_{\mathcal{K}\backslash k}} W_{Y_k | X_{\mathcal{K}}}(y|x_{\mathcal{K}})\left( \prod_{j \in \mathcal{K}\backslash k} P_{j}(x_j) \right),
\end{align}
$\bar{W}^{(k) \times n}$ is the $n$-product channel of $\bar{W}^{(k)}$, and $\tau_k$ is determined later. Roughly speaking, the channel $\bar{W}^{(k)}$ can be interpreted as a p-to-p channel between user pair $k$ while the interference signals from the other users are treated as noise. Each Rx $k$ upon observing $\mathbf{y}_k$ and $S_k$ decodes as follows: 
\begin{itemize}
\item If there exists a unique message $w_k \in [1:M_k]$ such that $(\mathbf{x}_k(w_k, S_k), \mathbf{y}_k) \in \mathcal{A}_{\tau_k}^n$, outputs an estimate $\hat{W}_k = w_k$.
\item Otherwise, declares a decoding error.   
\end{itemize}

\subsubsection{Codebook Size for Reliable Communication}  The following lemma gives an upper bound on the average probability of decoding error over the random codebook ensemble for a certain codebook size. 
\begin{lemma}\label{lem:achvrate}
Fix $\epsilon \in (0,1)$. When $n$ is sufficiently large and $\gamma_n$ goes to zero as $n$ tends to infinity, for
\begin{align} \label{eqn:achvrate}
\log M_k = (1 - \epsilon)\alpha_k n \gamma_n D(W_{k}^{(k)} \| W_{0}^{(k)}), \quad \forall k \in \mathcal{K},
\end{align}
the average probability of decoding error over the random codebook ensemble is upper bounded as
\begin{align}
\mathbb{E}[P_e^n] \leq e^{-c n \gamma_n}
\end{align}
for a constant $c > 0$. 
\end{lemma}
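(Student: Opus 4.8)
The plan is to prove Lemma~\ref{lem:achvrate} by a standard random-coding/typicality argument on the effective point-to-point channel $\bar W^{(k)}$, exploiting that all codebooks are drawn independently and that $\gamma_n\to 0$ makes the ``interference-as-noise'' channel $\bar W^{(k)}$ converge (coordinatewise) to the genuine single-user channel $W_k^{(k)}$ driven against the all-zero background $W_0^{(k)}$. Concretely, fix $k$ and a secret key value; condition on the true pair $(W_k,S_k)$. By the union bound, $\mathbb E[P_e^n]\le \sum_{k}\big(\Pr[(\bX_k(W_k,S_k),\bY_k)\notin \cA_{\tau_k}^n] + (M_k-1)\Pr[(\bX_k(w_k',S_k),\bY_k)\in\cA_{\tau_k}^n]\big)$, where in the second term $\bX_k(w_k',S_k)$ is independent of $\bY_k$ and distributed as $P_k^{\times n}$. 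So there are exactly two events to control: a ``covering''/atypicality event for the true codeword, and a ``confusion'' event for each wrong codeword. I would bound the second by the usual change-of-measure inequality $\Pr[(\bX,\bY)\in\cA_{\tau_k}^n]\le e^{-\tau_k}$ when $\bX$ is independent of $\bY$ with the right marginals, giving a contribution $\le M_k e^{-\tau_k}$.

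For the first (atypicality) event I would set $\tau_k$ slightly below $n$ times the per-letter expected LLR. The key quantity is $\mathbb E_{P_k,\bar W^{(k)}}\!\big[\log \tfrac{\bar W^{(k)}(Y|X)}{\bar W^{(k)}(Y|0)}\big]$, which I would show equals $\alpha_k\gamma_n D(W_k^{(k)}\|W_0^{(k)}) + O(\gamma_n^2)$: only the event $X=1$ (probability $\alpha_k\gamma_n$) contributes a nonvanishing LLR, and on that event $\bar W^{(k)}(\cdot|1)\to W_1^{(k)}(\cdot)$ up to $O(\gamma_n)$ corrections from the other Txs' rare $1$'s (by Lemma~\ref{lem:basic}-type expansions), while $\bar W^{(k)}(\cdot|0)\to W_0^{(k)}(\cdot)$ similarly. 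Choosing $\tau_k = (1-\epsilon/2)\alpha_k n\gamma_n D(W_k^{(k)}\|W_0^{(k)})$ then makes the atypicality event a large-deviation event for a sum of $n$ i.i.d.\ bounded LLR increments whose mean is $\approx \alpha_k\gamma_n D(W_k^{(k)}\|W_0^{(k)})$ per letter, i.e.\ a deviation of order $n\gamma_n$ from a sum with total mean of that same order. A Chernoff/Bernstein bound (the increments are bounded since all relevant relative entropies are finite by the absolute-continuity assumption $W_{\mathcal U}^{(k)}\ll W_0^{(k)}$) yields $\Pr[\cdot]\le e^{-c' n\gamma_n}$ for some $c'>0$; combined with $\log M_k=(1-\epsilon)\alpha_k n\gamma_n D(W_k^{(k)}\|W_0^{(k)}) < \tau_k - \Omega(n\gamma_n)$, the confusion term $M_k e^{-\tau_k}$ is also $e^{-c'' n\gamma_n}$. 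Summing the at most $K$ such terms preserves the exponential form, giving $\mathbb E[P_e^n]\le e^{-cn\gamma_n}$.

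The main obstacle I anticipate is making the large-deviation step clean when the per-letter LLR mean is itself vanishing like $\gamma_n$: one cannot just cite a fixed-gap Cramér bound, but must track constants through a $\gamma_n$-dependent Chernoff exponent, verifying that the exponent scales like $n\gamma_n$ (not $n\gamma_n^2$, which would be too weak) and that the $O(\gamma_n^2)$ error in the mean and the $O(\gamma_n)$ perturbation of $\bar W^{(k)}$ away from $W_1^{(k)}/W_0^{(k)}$ do not eat the gap. This is exactly where the boundedness of the LLR (finite $D(W_{\mathcal U}^{(k)}\|W_0^{(k)})$) and the smallness of $\gamma_n$ are both needed; a convenient route is to expand $\log\mathbb E[e^{-s Z}]$ for the LLR increment $Z$ at a well-chosen small $s$ (e.g.\ $s$ a constant), using $Z=0$ with probability $1-\alpha_k\gamma_n$, so the log-MGF is $\alpha_k\gamma_n(\mathbb E[e^{-sZ}\mid X=1]-1)+O(\gamma_n^2)$, linear in $\gamma_n$, and then optimize. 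A secondary, more routine obstacle is bookkeeping the effective-channel approximations $\bar W^{(k)}(\cdot|x)=W_{\{x\text{ at }k\}}^{(k)}(\cdot)+O(\gamma_n)$ uniformly in $y$; this follows by expanding $\prod_{j\ne k}P_j(x_j)$ and is entirely analogous to the expansions underlying Lemma~\ref{lem:basic}, so I would state it as a short sub-claim and defer the elementary verification.
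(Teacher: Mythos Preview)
Your proposal is correct and follows essentially the same route as the paper: union bound over $k$, split into the atypicality and confusion events for the effective point-to-point channel $\bar W^{(k)}$, and then use the approximation $D(\bar W^{(k)}(\cdot|1)\|\bar W^{(k)}(\cdot|0))=D(W_k^{(k)}\|W_0^{(k)})+o(1)$. The only difference is packaging: the paper observes that, after marginalizing out $X_{\mathcal K\setminus k}$, the two error terms are \emph{exactly} the p-to-p bounds for the DMC $(\mathcal X,\bar W^{(k)},\mathcal Y_k)$ with input $P_k$, and then invokes \cite[Lemma~3]{Bloch:16} as a black box for the $e^{-cn\gamma_n}$ decay, whereas you sketch that Chernoff/Bernstein step explicitly (your log-MGF expansion with $Z=0$ on $\{X=0\}$ is precisely how that lemma is proved).
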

\begin{proof}
Define $P_e^n(k) \coloneqq \Pr ( \hat{W}_k \neq W_k )$ for all $k \in \mathcal{K}$. From the union bound, we have $\mathbb{E}[P_e^n] \leq \sum_{k \in \mathcal{K}}\mathbb{E}[P_{e,k}^n]$. We analyze each term $\mathbb{E}[P_{e,k}^n]$. 

Consider Tx-Rx pair $k$ that utilize the encoding and decoding schemes described in Section \ref{subsubsec:scheme}. Due to the symmetry of the codebook ensemble, we can assume $W_k=1$ and $S_k=1$. Then, two types of decoding error events are defined as $E_{k,1} \coloneqq \{(\mathbf{x}_k(1, 1), \mathbf{y}_k) \notin \mathcal{A}_{\tau_k}^n\}$ and $E_{k,2} \coloneqq \{ \exists i \neq 1 \mbox{ s.t. } (\mathbf{x}_k(i, 1), \mathbf{y}_k) \in \mathcal{A}_{\tau_k}^n \}$. Also, we have $\mathbb{E}[P_{e,k}^n] \leq \mathbb{E}[\Pr(E_{k,1})] + \mathbb{E}[\Pr(E_{k,2})]$ by the union bound. Then, we obtain
\begin{align}
&\mathbb{E}[\Pr(E_{k,1})] + \mathbb{E}[\Pr(E_{k,2})] \nonumber \\
& = \Pr \left( (\mathbf{X}_k(1, 1), \mathbf{Y}_k) \notin \mathcal{A}_{\tau_k}^n \right)  \nonumber \\
& \quad + \Pr \left(  \exists i \neq 1 \mbox{ s.t. } (\mathbf{X}_k(i, 1), \mathbf{Y}_k) \in \mathcal{A}_{\tau_k}^n \right) \\ 
& \leq P_{W_{Y_k|X_{\mathcal{K}}}^{\times n} \cdot \prod_{k \in \mathcal{K}}P_{k}^{\times n}} \left( \log \frac{\bar{W}^{(k)\times n}(\mathbf{Y}_k|\mathbf{X}_k)}{\bar{W}^{(k)\times n}(\mathbf{Y}_k|\mathbf{0})} \leq \tau_k \right)  \nonumber \\
& \quad + M_k \sum_{\mathbf{x}_{\mathcal{K}}} \sum_{\mathbf{y}_k} W_{\boldsymbol{\alpha},\gamma_n}^{(k) \times n}(\mathbf{y}_k) \left( \prod_{k \in \mathcal{K}}P_{k}^{\times n}(\mathbf{x}_k) \right) \nonumber \\
&\quad \quad \times \mathds{1}\{ (\mathbf{x}_k, \mathbf{y}_k) \in \mathcal{A}_{\tau_k}^n \} \\ \label{eqn:ICtoDMC}
&= P_{\bar{W}^{(k) \times n} \cdot P_{k}^{\times n}} \left( \log \frac{\bar{W}^{(k)\times n}(\mathbf{Y}_k|\mathbf{X}_k)}{\bar{W}^{(k)\times n}(\mathbf{Y}_k|\mathbf{0})} \leq \tau_k \right) \nonumber \\
& \quad + M_k \sum_{\mathbf{x}_k} \sum_{\mathbf{y}_k} W_{\boldsymbol{\alpha},\gamma_n}^{(k) \times n}(\mathbf{y}_k) P_{k}^{\times n}(\mathbf{x}_k) \mathds{1}\{ (\mathbf{x}_k, \mathbf{y}_k) \in \mathcal{A}_{\tau_k}^n \}.
\end{align}
In fact, the right-hand side in \eqref{eqn:ICtoDMC} is an upper bound on the average probability of decoding error for a DMC $(\mathcal{X}, \bar{W}^{(k)}, \mathcal{Y}_k)$ when the same encoding and joint typicality decoding based on \eqref{eqn:jtset} are utilized (see the case of DMC with a warden \cite[Appendix D]{Bloch:16}). Thus, by applying the result in DMC \cite[Lemma 3]{Bloch:16}, for appropriately chosen $\tau_k$, we identify that if 
\begin{align} \label{eqn:tempptop}
\log M_k = (1 - \epsilon')\alpha_k n \gamma_n D(\bar{W}^{(k)}(\cdot | 1) \| \bar{W}^{(k)}(\cdot | 0))
\end{align}
for an arbitrarily small $\epsilon' > 0$, then $\mathbb{E}[P_e^n] \leq e^{-c n \gamma_n}$ for a constant $c > 0$. From some manipulations, we can check 
\begin{align}\label{eqn:approx1}
| D(W_{k}^{(k)} \| W_{0}^{(k)}) - D(\bar{W}^{(k)}(\cdot | 1) \| \bar{W}^{(k)}(\cdot | 0)) | = o(\gamma_n).
\end{align}
Thus, we have 
\begin{align}\label{eqn:approx2}
(1 - \epsilon')D(\bar{W}^{(k)}(\cdot | 1) \| \bar{W}^{(k)}(\cdot | 0)) \geq (1 - \epsilon)D(W_{k}^{(k)} \| W_{0}^{(k)}),
\end{align}
for an arbitrarily small constant $\epsilon > 0$ if $\gamma_n \rightarrow 0$ as $n \rightarrow \infty$. This completes the proof.
\end{proof} 
We note that it can be shown that $W_{0}^{(k)} = \bar{W}^{(k)}(\cdot | 0) + O(\gamma_n)$ and $W_{k}^{(k)} = \bar{W}^{(k)}(\cdot | 1) + O(\gamma_n)$. This shows that the effect of the sparse (i.e., $\gamma_n \rightarrow 0$ as $n \rightarrow \infty$) interfereing signals on each marginal p-to-p channel is negligible.

\subsubsection{Covert Communication in Channel Resolvability Perspective}
From the channel resolvability \cite{Han-Verdu, spectrum, Bloch:16, MAC}, one can make the channel output distribution at the warden to be close to i.i.d. process  (i.e., $\lim_{n \rightarrow \infty}D(\hat{Q}^n \| Q_{\boldsymbol{\alpha},\gamma_n}^{\times n}) = 0$) if a Tx sends sufficiently many codewords. The following lemma gives a sufficient size of codebooks at each Tx to guarantee this. The proof of this lemma is same with that of \cite[Lemma 3]{MAC}  and thus is omitted in this paper. 
\begin{lemma}\label{lem:achvkeyrate}
Fix $\epsilon \in (0,1)$. When $n$ is sufficiently large, for
\begin{align} \label{eqn:achvkeyrate}
\log M_kJ_k = (1 + \epsilon)\alpha_k n \gamma_n D(Q_{k} \| Q_{0}), \quad \forall k \in \mathcal{K},
\end{align}
the relative entropy between $\hat{Q}^n$ and $Q_{\boldsymbol{\alpha},\gamma_n}^{\times n}$ averaged over the random codebook ensemble is upper bounded as
\begin{align}
\mathbb{E} \left[ D(\hat{Q}^n \| Q_{\boldsymbol{\alpha},\gamma_n}^{\times n}) \right] \leq e^{-c n \gamma_n},  
\end{align}
for a constant $c > 0$. 
\end{lemma}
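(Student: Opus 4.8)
The plan is to establish Lemma~\ref{lem:achvkeyrate} by a channel--resolvability (soft--covering) argument for the DM--MAC $V_{Z|X_{\mathcal{K}}}$ from the $K$ transmitters to the warden. The $K$ independently generated codebooks $\{\mathbf{x}_k(w_k,s_k)\}$, with entries drawn i.i.d.\ from $P_k$ as in \eqref{eqPk}, induce at the warden the mixture $\hat{Q}^n$, and the goal is to make it $D$--close to the product target $Q_{\boldsymbol{\alpha},\gamma_n}^{\times n}$ obtained from the single--letter output distribution under the independent inputs $\prod_{k\in\mathcal{K}}P_k$. This is precisely the setting of \cite[Lemma~3]{MAC}, and I would follow that route.

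First, I would reduce the multi--codebook problem to $K$ single--codebook soft--covering subproblems by peeling off one transmitter at a time. Fixing the order $1,2,\dots,K$ without loss of generality, at the $k$--th step one replaces codebook $k$ by its idealized version (the true mixture $\sum_{x_k}P_k(x_k)V_{Z|X_{\mathcal{K}}}$) while conditioning on the codewords of transmitters $k{+}1,\dots,K$; the resulting succession of distributions connects $\hat{Q}^n$ to $Q_{\boldsymbol{\alpha},\gamma_n}^{\times n}$, and the resolvability argument of \cite{Bloch:16,MAC} bounds $\mathbb{E}[D(\hat{Q}^n\|Q_{\boldsymbol{\alpha},\gamma_n}^{\times n})]$ by a sum of $K$ residual terms, where the $k$--th term is small provided $\log M_kJ_k$ exceeds $n\,I(X_k;Z\mid X_{k+1},\dots,X_K)$ with a fixed multiplicative margin.

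Second, I would verify each of these $K$ rate conditions with margin $1+\tfrac{\epsilon}{2}$. By Lemma~\ref{lem:basic} applied to $(X_k,Z)$—and using that conditioning on the remaining coordinates, each equal to $0$ with probability $1-O(\gamma_n)$, shifts the mean only at order $\gamma_n^2$—one gets $I(X_k;Z\mid X_{k+1},\dots,X_K)=\alpha_k\gamma_n D(Q_k\|Q_0)+O(\gamma_n^2)$, while the hypothesis gives $\log M_kJ_k=(1+\epsilon)\alpha_k n\gamma_n D(Q_k\|Q_0)$; for $n$ large (hence $\gamma_n$ small) the ratio is at least $1+\tfrac{\epsilon}{2}$. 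For the decay, the per--letter information densities driving these residuals are bounded (because $Q_{\mathcal{U}}\ll Q_0$ on a finite alphabet), vanish on the overwhelmingly likely all--zero input pattern, and hence have mean and variance both $\Theta(\gamma_n)$; a Bernstein/Chernoff bound on the i.i.d.\ $n$--fold sums then gives an atypical--event probability $\le e^{-c'n\gamma_n}$ at each step. Summing the $K$ residuals and absorbing the $e^{-n\delta}$--type remainders with $\delta=\Theta(\gamma_n)$ yields $\mathbb{E}[D(\hat{Q}^n\|Q_{\boldsymbol{\alpha},\gamma_n}^{\times n})]\le e^{-cn\gamma_n}$.

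The step I expect to be the main obstacle is obtaining the \emph{exponential} rate $e^{-cn\gamma_n}$ rather than mere convergence: the rate surplus is only $\epsilon\gamma_n\sum_k\alpha_k D(Q_k\|Q_0)$, which shrinks with $\gamma_n$, so a crude large--deviation bound would give a vanishing exponent. One must use the sparsity of the $P_k$—mean and variance of the information density are $\Theta(\gamma_n)$ and it is $0$ off an $O(\gamma_n)$--probability event—so that the Bernstein exponent $\Theta\!\big(n\mu^2/(\sigma^2+\mu)\big)$ with $\mu,\sigma^2=\Theta(\gamma_n)$ evaluates to $\Theta(n\gamma_n)$. Reconciling this with the $O(\gamma_n^2)$ error terms carried over from Lemma~\ref{lem:basic} and with the per--codebook peeling is exactly what \cite[Lemma~3]{MAC} carries out, which is why the authors cite it rather than repeat the calculation.
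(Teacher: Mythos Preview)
Your proposal is correct and follows essentially the same route as the paper, which simply defers to \cite[Lemma~3]{MAC} for the proof. Your sketch—peeling off one codebook at a time, invoking Lemma~\ref{lem:basic} to reduce each conditional mutual information to $\alpha_k\gamma_n D(Q_k\|Q_0)+O(\gamma_n^2)$, and exploiting the $\Theta(\gamma_n)$ mean/variance of the information density to extract a $\Theta(n\gamma_n)$ Bernstein exponent—is an accurate outline of what that cited lemma does.
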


Now, we show that there exists a  coding scheme satisfying \eqref{eqn:error0}, \eqref{eqn:covert0}, \eqref{eqn:achvrate}, and \eqref{eqn:achvkeyrate}. First, assume that \eqref{eqn:achvrate} and \eqref{eqn:achvkeyrate} are satisfied. Then, by applying Markov's inequality, we obtain 
\begin{align} \label{eqn:Markov}
\Pr & ( \left( P_e^n < 4\mathbb{E}[P_e^n] \right) \cap \nonumber \\
& ( D(\hat{Q}^n \| Q_{\boldsymbol{\alpha},\gamma_n}^{\times n}) < 4 \mathbb{E}[D(\hat{Q}^n \| Q_{\boldsymbol{\alpha},\gamma_n}^{\times n})] ) ) \geq \frac{1}{2}.
\end{align} 
Thus, we can conclude that there exists a specific coding scheme that for sufficiently large $n$,
\begin{align}
P_e^n \leq e^{-c_1 n \gamma_n}, \\ \label{eqn:resolvability}
D(\hat{Q}^n \| Q_{\boldsymbol{\alpha},\gamma_n}^{\times n}) \leq e^{-c_2 n \gamma_n},
\end{align}
for constants $c_1 > 0$ and $c_2 > 0$.
Now we use the following lemma to show that $D(\hat{Q}^n \| Q_{0}^{\times n})$ can be arbitrarily small, whose proof can be checked in \cite[Appendix F]{MAC}.
\begin{lemma} \label{lem:close}
Assume that \eqref{eqn:resolvability} holds. Then, for sufficiently large $n$ and a constant $c_3 > 0$, we have
\begin{align} \label{eqn:resolvability2}
\left | D(\hat{Q}^n \| Q_{0}^{\times n}) - D(Q_{\boldsymbol{\alpha},\gamma_n}^{\times n} \| Q_{0}^{\times n})  \right | \leq e^{-c_3 n \gamma_n}.
\end{align}
\end{lemma}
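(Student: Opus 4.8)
The plan is to transfer the hypothesis $D(\hat{Q}^n \| Q_{\boldsymbol{\alpha},\gamma_n}^{\times n}) \le e^{-c_2 n \gamma_n}$ of \eqref{eqn:resolvability} into the claimed estimate by a chain-rule-type decomposition. First note that all three distributions are supported on $\supp(Q_0^{\times n})$: the mixture $\hat{Q}^n$ is a convex combination of products of the $Q_{\mathcal{U}}$'s and each $Q_{\mathcal{U}} \ll Q_0$, while $Q_{\boldsymbol{\alpha},\gamma_n}$ assigns the positive weight $\prod_{k}(1-\alpha_k\gamma_n)$ to $Q_0$ and hence has support exactly $\supp(Q_0)$. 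Therefore the identity
\begin{align}
D(\hat{Q}^n \| Q_0^{\times n}) = D(\hat{Q}^n \| Q_{\boldsymbol{\alpha},\gamma_n}^{\times n}) + \sum_{\mathbf{z}} \hat{Q}^n(\mathbf{z}) \log \frac{Q_{\boldsymbol{\alpha},\gamma_n}^{\times n}(\mathbf{z})}{Q_0^{\times n}(\mathbf{z})}
\end{align}
holds with every term finite, and subtracting $D(Q_{\boldsymbol{\alpha},\gamma_n}^{\times n} \| Q_0^{\times n}) = \sum_{\mathbf{z}} Q_{\boldsymbol{\alpha},\gamma_n}^{\times n}(\mathbf{z}) \log \frac{Q_{\boldsymbol{\alpha},\gamma_n}^{\times n}(\mathbf{z})}{Q_0^{\times n}(\mathbf{z})}$ gives
\begin{align}
D(\hat{Q}^n \| Q_0^{\times n}) - D(Q_{\boldsymbol{\alpha},\gamma_n}^{\times n} \| Q_0^{\times n}) = D(\hat{Q}^n \| Q_{\boldsymbol{\alpha},\gamma_n}^{\times n}) + \sum_{\mathbf{z}} \bigl( \hat{Q}^n(\mathbf{z}) - Q_{\boldsymbol{\alpha},\gamma_n}^{\times n}(\mathbf{z}) \bigr) \log \frac{Q_{\boldsymbol{\alpha},\gamma_n}^{\times n}(\mathbf{z})}{Q_0^{\times n}(\mathbf{z})}.
\end{align}
It remains to bound the two terms on the right.

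The first term is at most $e^{-c_2 n \gamma_n}$ by hypothesis. For the second, I would apply H\"older's inequality to bound its absolute value by $2\,\mathbb{V}(\hat{Q}^n, Q_{\boldsymbol{\alpha},\gamma_n}^{\times n})$ times $\max_{\mathbf{z} \in \supp(Q_0^{\times n})} \bigl| \log \tfrac{Q_{\boldsymbol{\alpha},\gamma_n}^{\times n}(\mathbf{z})}{Q_0^{\times n}(\mathbf{z})} \bigr|$ (off $\supp(Q_0^{\times n})$ both measures vanish, so nothing is lost). By Pinsker's inequality and \eqref{eqn:resolvability}, $\mathbb{V}(\hat{Q}^n, Q_{\boldsymbol{\alpha},\gamma_n}^{\times n}) \le e^{-c_2 n \gamma_n / 2}$. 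For the log-likelihood ratio, write $Q_{\boldsymbol{\alpha},\gamma_n} = \prod_k(1-\alpha_k\gamma_n)\,Q_0 + \sum_{\mathcal{U} \neq \emptyset} \lambda_{\mathcal{U}} Q_{\mathcal{U}}$ with nonnegative weights summing to $1 - \prod_k(1-\alpha_k\gamma_n) \le \gamma_n$, and use that $Q_{\mathcal{U}}(z) \le C\,Q_0(z)$ on $\supp(Q_0)$ for the finite constant $C \coloneqq \max_{\mathcal{U},\, z \in \supp(Q_0)} Q_{\mathcal{U}}(z)/Q_0(z)$ (finite because $\mathcal{Z}$ is finite and every $Q_{\mathcal{U}} \ll Q_0$); this yields $(1-\gamma_n)\,Q_0(z) \le Q_{\boldsymbol{\alpha},\gamma_n}(z) \le (1+C\gamma_n)\,Q_0(z)$ for all $z \in \supp(Q_0)$, hence $\bigl| \log \tfrac{Q_{\boldsymbol{\alpha},\gamma_n}(z)}{Q_0(z)} \bigr| \le c_4 \gamma_n$ for $\gamma_n$ small, and by the product structure $\bigl| \log \tfrac{Q_{\boldsymbol{\alpha},\gamma_n}^{\times n}(\mathbf{z})}{Q_0^{\times n}(\mathbf{z})} \bigr| \le c_4 n \gamma_n$ on $\supp(Q_0^{\times n})$.

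Combining the two estimates,
\begin{align}
\bigl| D(\hat{Q}^n \| Q_0^{\times n}) - D(Q_{\boldsymbol{\alpha},\gamma_n}^{\times n} \| Q_0^{\times n}) \bigr| \le e^{-c_2 n \gamma_n} + 2 c_4\, n \gamma_n\, e^{-c_2 n \gamma_n / 2}.
\end{align}
In the regime of interest $\gamma_n \to 0$ with $n \gamma_n \to \infty$ (as in the achievability construction, where $\gamma_n$ is of order $1/\sqrt{n}$), the polynomial factor $n\gamma_n$ is negligible against the exponential, so the right-hand side is at most $e^{-c_3 n \gamma_n}$ for any fixed $c_3 \in (0, c_2/2)$ and all sufficiently large $n$, which is exactly \eqref{eqn:resolvability2}.

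I expect the only genuinely non-routine ingredient to be the uniform $O(\gamma_n)$ control of the per-letter log-likelihood ratio $\log(Q_{\boldsymbol{\alpha},\gamma_n}/Q_0)$: this is where the standing assumptions $Q_{\mathcal{U}} \ll Q_0$ for all $\mathcal{U} \subseteq \mathcal{K}$ and the finiteness of $\mathcal{Z}$ are used, and one has to check the two-sided bound on all of $\supp(Q_0)$ — the lower bound is immediate from the $Q_0$-component of $Q_{\boldsymbol{\alpha},\gamma_n}$, while the upper bound needs the boundedness of the ratios $Q_{\mathcal{U}}/Q_0$. Everything else is the chain-rule decomposition, H\"older, and Pinsker, glued together by the fact that $n\gamma_n \to \infty$ beats the linear-in-$n$ growth of the log-ratio against the exponentially small variational distance.
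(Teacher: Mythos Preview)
Your proposal is correct and follows essentially the same route as the proof referenced from \cite[Appendix~F]{MAC}: the chain-rule decomposition $D(\hat{Q}^n\|Q_0^{\times n}) = D(\hat{Q}^n\|Q_{\boldsymbol{\alpha},\gamma_n}^{\times n}) + \mathbb{E}_{\hat{Q}^n}\bigl[\log\tfrac{Q_{\boldsymbol{\alpha},\gamma_n}^{\times n}}{Q_0^{\times n}}\bigr]$, then controlling the cross term via Pinsker and the uniform $O(\gamma_n)$ bound on the per-letter log-likelihood ratio (which rests on $Q_{\mathcal{U}}\ll Q_0$ and $|\mathcal{Z}|<\infty$). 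Your explicit identification of the needed regime $n\gamma_n\to\infty$ to absorb the polynomial prefactor into the exponential is exactly the point.
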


By combining Lemma \ref{lem:basic} and Lemma \ref{lem:close}, we obtain
\begin{align} \label{eqn:covertnessbound}
\frac{n\gamma_n^2}{2}(1 - \sqrt{\gamma_n})\chi^2(\boldsymbol{\alpha}) - e^{-c_3 n \gamma_n} \leq D(\hat{Q}^n \| Q_{0}^{\times n}) \nonumber \\ 
\leq e^{-c_3 n \gamma_n} + \frac{n\gamma_n^2}{2}(1 + \sqrt{\gamma_n})\chi^2(\boldsymbol{\alpha}).
\end{align}
Thus, by choosing an appropriate sequence $\{\gamma_n\}_{n \geq 1}$, we can satisfy $\lim_{n \rightarrow \infty}D(\hat{Q}^n \| Q_{0}^{\times n}) = 0$, and conclude that there exists a coding scheme satisfying \eqref{eqn:error0}, \eqref{eqn:covert0}, \eqref{eqn:achvrate}, and \eqref{eqn:achvkeyrate}. Finally, by combining \eqref{eqn:achvrate}, \eqref{eqn:achvkeyrate}, and \eqref{eqn:covertnessbound}, for $k \in \mathcal{K}$, we have 
\begin{align}
\lim_{n \rightarrow \infty} \frac{\log M_k}{\sqrt{n D(\hat{Q}^n \| Q_{0}^{\times n})}} =  \frac{\alpha_k D(W_{k}^{(k)} \| W_{0}^{(k)})}{\sqrt{\chi^2(\boldsymbol{\alpha})/2}}
\end{align}
and
\begin{align}
\lim_{n \rightarrow \infty} \frac{\log M_k J_k}{\sqrt{n D(\hat{Q}^n \| Q_{0}^{\times n})}} = \frac{\alpha_k D(Q_{k} \| Q_{0})}{\sqrt{\chi^2(\boldsymbol{\alpha})/2}}.
\end{align}
In addition, these yield
\begin{align}
&\lim_{n \rightarrow \infty} \frac{\log J_k}{\sqrt{n D(\hat{Q}^n \| Q_{0}^{\times n})}} \nonumber \\ 
&= \frac{ \alpha_k \left[ D(Q_k\|Q_0) -  D(W_{k}^{(k)} \| W_{0}^{(k)}) \right]^{+}}{ \sqrt{\chi^2(\boldsymbol{\alpha})/2} }.
\end{align}

\subsection{Converse}\label{subsec:conv}
Consider a sequence of codes of block length $n$ for the DM-IC with a warden satisfying \eqref{eqn:error0} and \eqref{eqn:covert0}. Each message $W_k$ at Tx $k$ is encoded as a codeword $\mathbf{X}_k = (X_{k1}, X_{k2},\ldots,X_{kn}) \in \mathcal{X}^n$ for all $k \in \mathcal{K}$. Let the distribution of $X_{ki} \in \mathcal{X}$ as $P_{ki}$ that is defined as follows:
\begin{align}
P_{ki}(x) \coloneqq \frac{1}{M_kJ_k}\sum_{w_k=1}^{M_k}\sum_{s_k=1}^{J_k}\mathds{1}\{X_{ki}(w_k, s_k) = x\},
\end{align} 
for all $k \in \mathcal{K}$ and $i \in [1:n]$. We also denote $P_{ki}(1) = 1 - P_{ki}(0)$ as $p_{ki}^{(n)}$ that is defined as $p_{ki}^{(n)} \coloneqq \alpha_{ki}\gamma_n$ where $\alpha_{ki}, \gamma_n \geq 0$, and $\alpha_k \coloneqq \frac{1}{n}\sum_{i=1}^n \alpha_{ki}$. Without loss of generality, we assume $\sum_{k \in \mathcal{K}} \alpha_{k} = 1$. 

Let us denote $\hat{Q}_i$ as the marginal distribution of $\hat{Q}^n$ on the $i^{\mathrm{th}}$ component. Then, we have
\begin{align} \label{eqn:combination}
\hat{Q}_i(z) = \sum_{\mathcal{U} \subseteq \mathcal{K}} \left( \prod_{k \in \mathcal{U}}p_{ki}^{(n)} \right) \left( \prod_{k \in \mathcal{U}^c} \left( 1 - p_{ki}^{(n)} \right) \right) Q_{\mathcal{U}}(z).
\end{align}
Using equation \eqref{eqn:combination}, we now show $\lim_{n \rightarrow 0}p_{ki}^{(n)} = 0$ for all $k \in \mathcal{K}$ and $i \in [1:n]$, and derive a lower bound on $D(\hat{Q}^n \| Q_{0}^{\times n})$, which are used to prove the converse part of Theorem \ref{thm:region}. First, we obtain
\begin{align} \label{eqn:start}
D(\hat{Q}^n \| Q_{0}^{\times n}) &= - H(\mathbf{Z}) + \mathbb{E}_{\hat{Q}^n} \left[ \log \frac{1}{Q_{0}^{\times n}(\mathbf{Z})} \right] \\
&= \sum_{i=1}^n \left( - H(Z_i | Z^{i-1}) + \mathbb{E}_{\hat{Q}_i} \left[ \log \frac{1}{Q_{0}(Z_i)} \right] \right) \\ \label{eqn:Jensen}
&\geq \sum_{i=1}^n \left( - H(Z_i) + \mathbb{E}_{\hat{Q}_i} \left[ \log \frac{1}{Q_{0}(Z_i)} \right] \right) \\ \label{eqn:last}
&= \sum_{i=1}^n D(\hat{Q}_i \| Q_{0}).
\end{align}
Since we assume $\lim_{n \rightarrow \infty}D(\hat{Q}^n \| Q_{0}^{\times n}) = 0$, and the relative entropy is nonnegative, we have $\lim_{n \rightarrow \infty}D(\hat{Q}_i \| Q_{0}) = 0$ for all $i \in [1:n]$. Furthermore, by using Pinsker's inequality \cite{Cover:2006}, we obtain
\begin{align} 
\lim_{n\rightarrow \infty}| \hat{Q}_i(z) - Q_0(z)| = 0, \quad \forall z \in \mathcal{Z},
\end{align}
and thus
\begin{align} \label{eqn:Qlimit}
\lim_{n\rightarrow \infty}\hat{Q}_i(z) = Q_0(z), \quad \forall z \in \mathcal{Z}. 
\end{align}
In our setting, we assume that $Q_0$ cannot be represented as any convex combination of some $Q_{\mathcal{U}}$ for some $\mathcal{U} \subseteq \mathcal{K}$. From equations \eqref{eqn:combination} and \eqref{eqn:Qlimit}, we observe that this assumption is contradictory if there exists a sequence $p_{ki}^{(n)}$ such that $\lim_{n \rightarrow \infty} p_{ki}^{(n)} \neq 0$ for some $k$ and $i$. Hence, we can conclude that
\begin{align}
\lim_{n \rightarrow \infty} p_{ki}^{(n)} = 0, \quad \forall k \in \mathcal{K} \mbox{ and } \forall i \in [1:n].
\end{align}
A simple intuition behind is that if transmissions of symbol 1 are concentrated in some Txs at some specific time slots, detectability of the communication at the warden increases. This shows the importance of \textit{diffuse} signaling for the covert communication scenario \cite{Wang:16, PPM2}.

Now, we derive a lower bound on $D(\hat{Q}^n \| Q_{0}^{\times n})$. Define $\Delta_i^{(n)}(z) \coloneqq \hat{Q}_i(z) - Q_0(z)$. Then, from some manipulations \cite[Eq. (62)-(77)]{MAC}, we have
\begin{align} \label{eqn:relativelower}
D(\hat{Q}^n \| Q_{0}^{\times n}) \geq \sum_z \frac{1 - \phi^{(n)}(z)}{2 Q_0(z)} \sum_{i=1}^n \left( \Delta_i^{(n)}(z) \right)^2,
\end{align} 
where $\phi^{(n)}(z) = \max_{i \in [1:n]} \frac{\Delta_i^{(n)}(z)}{Q_0(z)}+ \frac{4|\Delta_i^{(n)}(z)|}{3Q_0(z)}$ and $\lim_{n \rightarrow \infty}\phi^{(n)}(z) = 0$.

\subsubsection{Outer Bound on the Covert Capacity Region} 

We derive the outer bound on the covert capacity region. By the standard technique, we have 
\begin{align}
&\log M_k \nonumber \\
&= H(W_k|W_{\mathcal{K} \backslash k}, S_{\mathcal{K} \backslash k}) \\
&= I(W_k ; \mathbf{Y}_k, S_k | W_{\mathcal{K} \backslash k}, S_{\mathcal{K} \backslash k}) + H(W_k | \mathbf{Y}_k, S_{\mathcal{K}}, W_{\mathcal{K} \backslash k})\\ 
&\leq I(W_k, S_k ; \mathbf{Y}_k | W_{\mathcal{K} \backslash k}, S_{\mathcal{K} \backslash k}) + H(W_k | \mathbf{Y}_k, S_{\mathcal{K}}, W_{\mathcal{K} \backslash k}) \\ \label{eqn:con_M_Fano}
&\leq I(W_k, S_k ; \mathbf{Y}_k | W_{\mathcal{K} \backslash k}, S_{\mathcal{K} \backslash k}) + H(\epsilon_{nk}) + \epsilon_{nk}\log M_k \\
&= \sum_{i=1}^{n} I(W_k, S_k ; Y_{ki}| Y_k^{i-1}, W_{\mathcal{K} \backslash k}, S_{\mathcal{K} \backslash k}) + H(\epsilon_{nk}) \nonumber \\
& \quad + \epsilon_{nk}\log M_k \\
&= \sum_{i=1}^{n} I(W_k, S_k, X_{ki} ; Y_{ki}| Y_k^{i-1}, W_{\mathcal{K} \backslash k}, S_{\mathcal{K} \backslash k}, X_{\{\mathcal{K} \backslash k\}i}) \nonumber \\
& \quad + H(\epsilon_{nk}) + \epsilon_{nk}\log M_k \\
&\leq \sum_{i=1}^{n} I(W_{\mathcal{K}}, S_{\mathcal{K}}, X_{ki}, Y_k^{i-1} ; Y_{ki} | X_{\{\mathcal{K} \backslash k\}i}) + H(\epsilon_{nk}) \nonumber \\
& \quad + \epsilon_{nk}\log M_k \\ \label{eqn:convlast}
&= \sum_{i=1}^{n} I(X_{ki} ; Y_{ki} | X_{\{\mathcal{K} \backslash k\}i}) + H(\epsilon_{nk}) + \epsilon_{nk}\log M_k,
\end{align}
where \eqref{eqn:con_M_Fano} is from Fano's inequality \cite{Cover:2006}, and $\epsilon_{nk} > 0$ is an arbitrarily small constant. Thus, we obtain
\begin{align} \label{eqn:convcodebook}
\log M_k \leq \frac{\sum_{i=1}^{n} I(X_{ki} ; Y_{ki}| X_{\{\mathcal{K} \backslash k\} i}) + H(\epsilon_{nk})}{1-\epsilon_{nk}}.
\end{align}
Since the operations of the Txs are independent at each time slot, we have
\begin{align}
&I( X_{ki} ; Y_{ki}| X_{\{\mathcal{K} \backslash k\} i} ) \nonumber \\ \label{eqn:1colstart}
&= \sum_{x_{\{\mathcal{K} \backslash k\} i}} \left( \prod_{j=1, \neq k}^{K} P_{ji}(x_{ji}) \right) \nonumber \\
& \quad \times I(X_{ki} ; Y_{ki}| X_{\{\mathcal{K} \backslash k\} i} = x_{\{\mathcal{K} \backslash k\} i})\\
&= \sum_{x_{\{\mathcal{K} \backslash k\} i}} \left( \prod_{j=1, \neq k}^{K} P_{ji}(x_{ji}) \right) I(X_{ki}, W_{Y_k| x_{\{\mathcal{K} \backslash k\} i}, X_k})\\ \label{eqn:ptopbound}
&\leq \sum_{x_{\{\mathcal{K} \backslash k\} i}} \left( \prod_{j=1, \neq k}^{K} P_{ji}(x_{ji}) \right) p_{ki}^{(n)} \nonumber \\
&\quad \times D\left (\left. W_{Y_k| x_{\{\mathcal{K} \backslash k\} i}, X_k}(\cdot | 1)  \right \| W_{Y_k| x_{\{\mathcal{K} \backslash k\} i}, X_k}(\cdot | 0) \right ) \\
&= \left( \prod_{j=1,\neq k}^{K} (1 - p_{ji}^{(n)}) \right) p_{ki}^{(n)} D(W_{k}^{(k)} \| W_{0}^{(k)}) \nonumber \\
& \quad + \sum_{x_{\{\mathcal{K} \backslash k\} i} \neq \mathbf{0}} \left( \prod_{j=1,\neq k}^{K} P_{ji}(x_{ji}) \right) p_{ki}^{(n)} \nonumber \\
&\quad \times D\left (\left. W_{Y_k| x_{\{\mathcal{K} \backslash k\} i}, X_k}(\cdot | 1)  \right \| W_{Y_k| x_{\{\mathcal{K} \backslash k\} i}, X_k}(\cdot | 0) \right ) \\
&\leq p_{ki}^{(n)} D\left (\left. W_{k}^{(k)} \right \| W_{0}^{(k)}\right ) + \sum_{x_{\{\mathcal{K} \backslash k\} i} \neq \mathbf{0}} \left( \prod_{j=1,\neq k}^{K} P_{ji}(x_{ji}) \right) \nonumber \\
&\quad \times p_{ki}^{(n)}  D\left (\left. W_{Y_k| x_{\{\mathcal{K} \backslash k\} i}, X_k}(\cdot | 1)  \right \| W_{Y_k| x_{\{\mathcal{K} \backslash k\} i}, X_k}(\cdot | 0) \right ) \\ \label{eqn:convmutu}
&= p_{ki}^{(n)} D(W_{k}^{(k)} \| W_{0}^{(k)}) + O(p_{\text{max}}^{(n)} \cdot p_{ki}^{(n)}),
\end{align}
where \eqref{eqn:ptopbound} is from the result in DMC with a warden \cite[Eq. (98)]{Bloch:16}, and $p_{\text{max}}^{(n)} \coloneqq \max_{\{k, i\}} p_{ki}^{(n)}$. Then, inequality \eqref{eqn:convcodebook} is now given as
\begin{align} \label{eqn:convcodebook2}
\log M_k \leq \frac{ \alpha_k n \gamma_n D(W_{k}^{(k)} \| W_{0}^{(k)}) + o(\alpha_k n \gamma_n) + H(\epsilon_{nk})}{1-\epsilon_{nk}}.
\end{align}
By combining \eqref{eqn:relativelower} and \eqref{eqn:convcodebook2}, we have \eqref{eqn:1col2start}-\eqref{eqn:Cauchy},
\begin{figure*}
\begin{align}\label{eqn:1col2start}
\frac{\log M_k}{\sqrt{nD(\hat{Q}^n \| Q_{0}^{\times n})}} &\leq \frac{ \sum_{i=1}^n \left( p_{ki}^{(n)} D(W_{k}^{(k)} \| W_{0}^{(k)}) + O(p_{\text{max}}^{(n)} \sum_{k \in \mathcal{K}} p_{ki}^{(n)}) \right) +H(\epsilon_{nk})}{(1 - \epsilon_{nk})\sqrt{n\sum_{z\in \mathcal{Z}} \frac{1 - \phi^{(n)}(z)}{2 Q_0(z)} \sum_{i=1}^n \left( \Delta_i^{(n)}(z) \right)^2}} \\
&= \frac{ \left(\sum_{k\in \mathcal{K}} \sum_{i=1}^n p_{ki}^{(n)}\right)  \left( D(W_{k}^{(k)} \| W_{0}^{(k)}) \frac{\sum_i p_{ki}^{(n)}}{\sum_k \sum_i p_{ki}^{(n)}}  + O(p_{\text{max}}^{(n)}) \right) + \frac{H(\epsilon_{nk})}{\sum_k \sum_i p_{ki}^{(n)}}}{(1 - \epsilon_{nk})\sqrt{n\sum_{z \in \mathcal{Z}} \frac{1 - \phi^{(n)}(z)}{2 Q_0(z)} \sum_{i=1}^n \left( \Delta_i^{(n)}(z) \right)^2}} \\
&= \frac{ D(W_{k}^{(k)} \| W_{0}^{(k)}) \frac{\sum_i p_{ki}^{(n)}}{\sum_k \sum_i p_{ki}^{(n)}}  + O(p_{\text{max}}^{(n)}) + \frac{H(\epsilon_{nk})}{\sum_k \sum_i p_{ki}^{(n)}}}
{(1 - \epsilon_{nk})\sqrt{\sum_{z \in \mathcal{Z}} \frac{1 - \phi^{(n)}(z)}{2 Q_0(z)} \frac{n\sum_i \left( \Delta_i^{(n)}(z) \right)^2}{\left(\sum_k \sum_i p_{ki}^{(n)}\right)^2}}} \\ \label{eqn:Cauchy}
&\leq \frac{ D(W_{k}^{(k)} \| W_{0}^{(k)}) \frac{\sum_i p_{ki}^{(n)}}{\sum_k \sum_i p_{ki}^{(n)}}  + O(p_{\text{max}}^{(n)}) + \frac{H(\epsilon_{nk})}{\sum_k \sum_i p_{ki}^{(n)}}}
{(1 - \epsilon_{nk})\sqrt{\sum_{z \in \mathcal{Z}} \frac{1 - \phi^{(n)}(z)}{2 Q_0(z)} \left( \frac{\sum_i \Delta_i^{(n)}(z)}{ \sum_k \sum_i p_{ki}^{(n)} }\right)^2 }},
\end{align}
\hrule
\end{figure*}
where \eqref{eqn:Cauchy} is due to Cauchy-Schwartz inequality. Then, by following the same steps with \cite[Eq. (107)-(114)]{MAC}, we finally obtain
\begin{align} \label{eqn:convrate}
\liminf_{n \rightarrow \infty}\frac{\log M_k}{\sqrt{nD(\hat{Q}^n \| Q_{0}^{\times n})}} \leq \frac{\alpha_k D(W_{k}^{(k)} \| W_{0}^{(k)})}{\sqrt{\chi^2(\boldsymbol{\alpha})/2}}, \quad \forall k \in \mathcal{K}.
\end{align}

\subsubsection{Converse Result for the Key Rate} 
To achieve the right-hand side in \eqref{eqn:convrate}, we must have
\begin{align}
\limsup_{n \rightarrow \infty}\frac{\log M_kJ_k}{\sqrt{nD(\hat{Q}^n \| Q_{0}^{\times n})}} \geq \frac{\alpha_k D(Q_{k} \| Q_{0})}{\sqrt{\chi^2(\boldsymbol{\alpha})/2}}, \quad \forall k \in \mathcal{K}.
\end{align}

The proof of the above inequality is same with that of \cite[Eq. (41)]{MAC} since the channel structure from the Txs to the warden is same with that of the DM-MAC with a warden \cite{MAC}. Thus, we omit the proof in this paper and refer the readers to check \cite[Eq. (116)-(155)]{MAC} for the concrete proof.

\section{Extensions}\label{sec:exten}
In this section, we extend the result for the BI DM-IC with a warden to several scenarios. The covert capacity regions of the non-binary input DM-IC and the Gaussian IC with a warden are presented in Sections \ref{subsec:nonbinary} and \ref{subsec:Gaussian}, respectively. For these two cases, we assume that a sufficiently long secret key is shared between every Tx-Rx pair, and we do not focus on the secret key length. Furthermore, we consider the BI DM-IC where the communication is required to be covert against $J$ wardens in Section \ref{subsec:Kwarden}. 

\subsection{Non-binary Input DM-ICs}\label{subsec:nonbinary}
The input alphabet at each Tx $k$ of the DM-IC is given as $\mathcal{X}_k = \{ 0, 1, \ldots, m_k \} $ where $0 \in \mathcal{X}_k$ is the ``off" input symbol. Define $m \coloneqq \max_{k \in \mathcal{K}}m_k$. Then, we introduce a $K$ by $m$ matrix $\boldsymbol{B} \coloneqq \{\beta_{ki}\}_{k \in \mathcal{K}, i \in [1:m]} \in [0,1]^{K \times m}$ such that $\sum_{i=1}^{m}\beta_{ki} = \sum_{i=1}^{m_k}\beta_{ki} = 1$ for all $k \in \mathcal{K}$. Roughly, $\alpha_k \beta_{ki} \gamma_n$ can be interpreted as the probability of sending symbol $i \neq 0$ in Tx $k$.
Theorem \ref{thm:region} can be generalized to the non-binary input case by  extending the proof steps in Section \ref{sec:proof} to involve the term $\boldsymbol{B}$. Thus, we omit the full proof of the extension for brevity, and present some changes in notations and main results.  

Similar to the case of the BI DM-IC, we define $Q_{k,i}(z) \coloneqq V_{  Z|X_k = i, X_{\mathcal{K} \backslash k} = \boldsymbol{0}  }(z)$ and $W_{k,i}^{(k)}(y) \coloneqq W_{ Y_k|X_k = i, X_{\mathcal{K} \backslash k} = \boldsymbol{0}  }(y) $. In addition, we define a chi-squared distance $\chi^2(\boldsymbol{\alpha}, \boldsymbol{B})$ as
\begin{align}
\chi^2(\boldsymbol{\alpha}, \boldsymbol{B}) \coloneqq \sum_z \frac{\left( \sum_{k \in \mathcal{K}}\sum_{i=1}^{m_k} \alpha_k \beta_{ki} Q_{k,i}(z) - Q_0(z) \right)^2}{Q_0(z)}.
\end{align}
Furthermore, similar to the distributions \eqref{eqPk} and \eqref{eqn:wardendist}, for $\gamma_n \in [0,1]$ and $k \in \mathcal{K}$, we define 
\begin{align}
P'_{k}(x) \coloneqq 
\begin{cases}
1- \alpha_k \beta_{ki} \gamma_n & x = 0 \\
 \alpha_k \beta_{ki} \gamma_n & x = i, \neq 0,
\end{cases} 
\end{align}
and
\begin{align}
Q_{\boldsymbol{\alpha}, \boldsymbol{B}, \gamma_n}(z) \coloneqq \sum_{x_{\mathcal{K}}} V_{Z | X_{\mathcal{K}}}(z | x_{\mathcal{K}}) \left( \prod_{k \in \mathcal{K}}P'_k(x_k) \right).
\end{align} 
Then, Lemma \ref{lem:basic} is extended into the following: 
\begin{align} 
\frac{\gamma_n^2}{2}(1 - \sqrt{\gamma_n})\chi^2(\boldsymbol{\alpha}, \boldsymbol{B}) \leq D(Q_{\boldsymbol{\alpha}, \boldsymbol{B}, \gamma_n} \| Q_0) \nonumber \\ \label{eqn:exkeylemma}
\leq \frac{\gamma_n^2}{2}(1 + \sqrt{\gamma_n})\chi^2(\boldsymbol{\alpha},\boldsymbol{B}).
\end{align}
Roughly, equation \eqref{eqn:exkeylemma} presents a bound on the number of each symbol at each Tx that can be reliably transmitted while satisfying the covertness constraint. By following the similar steps to those in Section \ref{sec:proof}, we obtain Theorem \ref{thm:nonbinary}. 
\begin{theorem}\label{thm:nonbinary}
For the $K$-user non-binary input DM-IC with a warden, the covert capacity region is the set of the rate tuple $R_{\mathcal{K}}^{\mathrm{NB}}$ satisfying 
\begin{align} \label{eqn:region}
R_k^{\mathrm{NB}} \leq \frac{\alpha_k \sum_{i=1}^{m_k} \beta_{ki} D(W_{k,i}^{(k)} \| W_{0}^{(k)})}{\sqrt{\chi^2(\boldsymbol{\alpha}, \boldsymbol{B})/2}}, \quad \forall k \in \mathcal{K}
\end{align}
for some $\boldsymbol{\alpha} \in [0,1]^K$ such that $\sum_{k \in \mathcal{K}}\alpha_k = 1$ and some $\boldsymbol{B} \coloneqq \{\beta_{ki}\}_{k \in \mathcal{K}, i \in [1:m]} \in [0,1]^{K \times m}$ such that $\sum_{i=1}^{m}\beta_{ki} = \sum_{i=1}^{m_k}\beta_{ki} = 1$ for all $k \in \mathcal{K}$.
\end{theorem}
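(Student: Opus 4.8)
The plan is to reduce Theorem~\ref{thm:nonbinary} to Theorem~\ref{thm:region} by viewing each non-zero input symbol $i \in [1:m_k]$ at Tx~$k$ as a "virtual user" in an enlarged binary-input interference channel. Concretely, I would construct a BI DM-IC with $\sum_{k \in \mathcal{K}} m_k$ virtual Tx-Rx pairs, indexed by pairs $(k,i)$ with $k \in \mathcal{K}$ and $i \in [1:m_k]$, where virtual Tx $(k,i)$ sends a binary symbol whose "on" state corresponds to Tx~$k$ emitting symbol $i$. Since by Lemma~\ref{lem:basic} (and its extension \eqref{eqn:exkeylemma}) the probability of emitting \emph{any} non-zero symbol is $\Theta(\gamma_n) \to 0$, with high probability at most one Tx emits a non-zero symbol in a given channel use, so the event that two virtual users $(k,i)$ and $(k,i')$ are simultaneously "on" has probability $O(\gamma_n^2)$ and is negligible at the $\sqrt{n}$ scale; this makes the virtual-user interpretation consistent. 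Under this correspondence, the allocation weight of virtual user $(k,i)$ is $\alpha_k \beta_{ki}$, these weights sum to one over all virtual users, $Q_{(k,i)} = Q_{k,i}$, $W^{(k,i)}_{(k,i)} = W^{(k)}_{k,i}$, and the chi-squared factor $\chi^2(\boldsymbol{\alpha},\boldsymbol{B})$ of the non-binary channel equals the factor $\chi^2$ of the enlarged binary channel evaluated at the weight vector $\{\alpha_k \beta_{ki}\}$.

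Next I would verify the two halves separately, essentially by rerunning the proof in Section~\ref{sec:proof} with the extra index $i$. For achievability: Tx~$k$ generates its codewords with $P'_k$ as in the statement, so that symbol $i$ appears with probability $\alpha_k \beta_{ki}\gamma_n$; Rx~$k$ decodes its own message by treating both inter-user and inter-symbol interference as noise, using the jointly typical set built from the averaged channel $\bar W^{(k)}$. The decoding-error analysis of Lemma~\ref{lem:achvrate} goes through verbatim once $W^{(k)}_k$ is replaced by the mixture $\sum_{i=1}^{m_k}\beta_{ki} W^{(k)}_{k,i}$ in the exponent — more precisely, the achievable $\log M_k$ becomes $(1-\epsilon)\alpha_k n\gamma_n \sum_{i=1}^{m_k}\beta_{ki} D(W^{(k)}_{k,i}\|W^{(k)}_0)$, because conditioned on Tx~$k$ being "on" it emits symbol $i$ with probability $\beta_{ki}$ and the relevant information density is, up to $o(\gamma_n)$ corrections, a $\beta_{ki}$-weighted average of the per-symbol densities. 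The resolvability bound (Lemma~\ref{lem:achvkeyrate}) and the passage from $D(\hat Q^n\|Q^{\times n}_{\boldsymbol{\alpha},\boldsymbol{B},\gamma_n})$ to $D(\hat Q^n\|Q^{\times n}_0)$ via Lemma~\ref{lem:close} and \eqref{eqn:exkeylemma} are unchanged in form, so dividing by $\sqrt{nD(\hat Q^n\|Q^{\times n}_0)} = \sqrt{n\gamma_n^2\chi^2(\boldsymbol{\alpha},\boldsymbol{B})/2}(1+o(1))$ yields exactly the right-hand side of \eqref{eqn:region}.

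For the converse I would follow the chain \eqref{eqn:1colstart}--\eqref{eqn:convmutu} with $\mathcal{X}_k$ now $m_k+1$-ary. The single-letter bound $\log M_k \le \sum_i I(X_{ki};Y_{ki}\mid X_{\{\mathcal{K}\setminus k\}i})/(1-\epsilon_{nk}) + \ldots$ is unaffected. To bound $I(X_{ki};Y_{ki}\mid X_{\{\mathcal{K}\setminus k\}i} = x_{\{\mathcal{K}\setminus k\}i})$ when the conditioning symbols are all zero, I would expand $I(X_{ki};Y_{ki})$ for a near-deterministic input: writing $P_{ki}(i) = \alpha_k\beta_{ki}\gamma_n + $ (contributions that vanish relative to $\gamma_n$), a first-order expansion in the small probabilities gives $I(X_{ki};Y_{ki}) \le \sum_{i=1}^{m_k} P_{ki}(i) D(W^{(k)}_{k,i}\|W^{(k)}_0) + o(\gamma_n)$, which is the non-binary analogue of \eqref{eqn:ptopbound}; the cross terms from distinct non-zero symbols and from non-zero interference symbols are $O(p^{(n)}_{\max}\cdot\gamma_n)$ and hence absorbed. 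Summing over $i\in[1:n]$ gives $\log M_k \le \alpha_k n\gamma_n\sum_{i=1}^{m_k}\beta_{ki} D(W^{(k)}_{k,i}\|W^{(k)}_0) + o(\alpha_k n\gamma_n) + H(\epsilon_{nk})$, the exact analogue of \eqref{eqn:convcodebook2}. Then the lower bound on $D(\hat Q^n\|Q^{\times n}_0)$ in \eqref{eqn:relativelower}, the Cauchy--Schwarz step \eqref{eqn:Cauchy}, and the limiting argument of \cite[Eq.~(107)-(114)]{MAC} carry over once $\Delta^{(n)}_i(z) = \hat Q_i(z)-Q_0(z)$ is re-expressed through $\{\alpha_{ki}\beta_{ki}\}$; the key identity is that the leading term of $\Delta^{(n)}_i(z)$ is $\sum_{k}\sum_{i=1}^{m_k}\alpha_{ki}\beta_{ki}\gamma_n(Q_{k,i}(z)-Q_0(z))$, whose squared sum over $z$ against $1/Q_0(z)$ produces $\chi^2(\boldsymbol{\alpha},\boldsymbol{B})$.

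The main obstacle I anticipate is the first-order expansion of the mutual-information term in the converse: unlike the binary case where $[x]^+$-style bounds from \cite{Bloch:16} apply directly, here $X_{ki}$ ranges over $m_k+1$ values and one must argue that the off-diagonal cross terms (pairs of distinct non-zero symbols, and non-zero conditioning symbols) genuinely contribute only $o(\gamma_n)$ per channel use and do not, after summation, degrade the constant in \eqref{eqn:region}. This requires a careful uniform-in-$n$ bound on these remainders, using $p^{(n)}_{\max} = \Theta(\gamma_n) \to 0$, together with the absolute-continuity assumptions $W^{(k)}_{k,i}\ll W^{(k)}_0$ which guarantee the per-symbol relative entropies are finite and the expansion is valid; everything else is a mechanical relabeling of the binary argument.
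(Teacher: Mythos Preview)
Your proposal is correct and follows the same approach as the paper: both you and the authors extend the binary proof of Section~\ref{sec:proof} by carrying the additional symbol index through Lemmas~\ref{lem:basic}--\ref{lem:close} and the converse chain, with the paper stating only that ``Theorem~\ref{thm:region} can be generalized to the non-binary input case by extending the proof steps in Section~\ref{sec:proof} to involve the term $\boldsymbol{B}$'' while you spell out the details. One caveat on your framing: the virtual-user picture is a helpful heuristic but is \emph{not} a literal black-box reduction to Theorem~\ref{thm:region}, since virtual users $(k,1),\ldots,(k,m_k)$ share a single message $W_k$ and a single receiver $Y_k$, violating the independence structure assumed there---this is harmless because your actual argument (paragraphs two and three) reruns the proof directly rather than invoking Theorem~\ref{thm:region}.
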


\subsection{Gaussian ICs}\label{subsec:Gaussian}
Consider a $K$-user Gaussian IC with a warden. Let the channel gain from Tx $k$ to Rx $j$ be $g_{jk} \in \mathbb{R}$, and from Tx $k$ to the warden be $g_{wk} \in \mathbb{R}$. Then, at transmission time $i \in [1:n]$, the channel outputs at Rx $j$, $Y_{ji}$ and at the warden, $Z_{i}$ are given as 
\begin{align}
Y_{ji} = \sum_{k \in \mathcal{K}}g_{jk}X_{ki} + N_{ji},\\
Z_{i} = \sum_{k \in \mathcal{K}}g_{wk}X_{ki} + N_{wi},
\end{align}
where $N_{ji}$ and $N_{wi}$ are white Gaussian noise $\mathcal{N}(0,\sigma^2)$. We assume average transmit power constraints\footnote{ The covertness constraint restricts the transmit power to very low level. Thus, if $P_{\mathrm{AWGN}}$ is non-vanishing, the covert capacity region does not depend on the value of $P_{\mathrm{AWGN}}$.}
\begin{align}
\frac{1}{n}\sum_{i=1}^n x_{ki}^2(w_k) \leq P_{\mathrm{AWGN}}, \quad w_k \in [1:M_k],\: \forall k \in \mathcal{K}.
\end{align}
The covert capacity region of the Gaussian IC with a warden is shown in the following theorem.
\begin{theorem}\label{thm:Gaussiancapacity}
For the $K$-user Gaussian IC with a warden, the covert capacity region is the set of the rate tuple $R_{\mathcal{K}}^{\mathrm{AWGN}}$ satisfying
\begin{align}
R_{k}^{\mathrm{AWGN}} \leq \frac{1}{\lambda(\boldsymbol{\alpha})}\alpha_k g_{kk}^2, \quad \forall k \in \mathcal{K}
\end{align}
for some $\boldsymbol{\alpha} \in [0,1]^K$ such that $\sum_{k \in \mathcal{K}}\alpha_k = 1$, where $\lambda(\boldsymbol{\alpha})$ is defined as 
\begin{align}
\lambda(\boldsymbol{\alpha}) \coloneqq \sum_{k \in \mathcal{K}}\alpha_k g_{wk}^2. 
\end{align}
\end{theorem}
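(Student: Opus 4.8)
The plan is to obtain Theorem~\ref{thm:Gaussiancapacity} as a limiting/specialized case of the (non-binary) DM-IC result, Theorem~\ref{thm:nonbinary}, by discretizing the Gaussian channel and taking a careful low-power limit. First I would fix a discretization: given that the covertness constraint forces the per-symbol ``on'' probability $\gamma_n \to 0$, I would restrict each Tx to a scaled binary/finite constellation, say $X_{ki}\in\{0,\sqrt{P_i}\}$ with amplitudes indexed by $i$, and think of amplitude $\sqrt{P}$ as the analog of input symbol $i\neq 0$. The key analytic fact is that for small power, the relative entropy between the warden's output $\mathcal{N}(g_{wk}\sqrt{P},\sigma^2)$ and the off-distribution $\mathcal{N}(0,\sigma^2)$ behaves as $D = \tfrac{g_{wk}^2 P}{2\sigma^2} + o(P)$, and likewise $D(W_{k}^{(k)}\|W_0^{(k)}) = \tfrac{g_{kk}^2 P}{2\sigma^2}+o(P)$; the chi-squared quantity $\chi^2(\boldsymbol\alpha)$ from Lemma~\ref{lem:basic} linearizes in the same way because both it and the relevant $D(Q_k\|Q_0)$ agree to first order. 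Thus in the numerator and denominator of the rate bound the $\sigma^2$ factors and the $o(P)$ terms cancel in the ratio, and one is left with $R_k \le \alpha_k g_{kk}^2 / \big(\sum_{j} \alpha_j g_{wj}^2\big) = \alpha_k g_{kk}^2/\lambda(\boldsymbol\alpha)$, which is exactly the claimed region. The power level $P$ and the power-allocation weights $\beta_{ki}$ then drop out entirely, consistent with the footnote that $P_{\mathrm{AWGN}}$ is irrelevant once it is non-vanishing.

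Concretely, for achievability I would run the TIN scheme of Section~\ref{subsec:achieve} verbatim on the discretized channel: Tx $k$ sends the nonzero amplitude with probability $\alpha_k\gamma_n$, each Rx treats the (sparse, hence negligible) interference as extra Gaussian noise and uses the joint-typicality rule of \eqref{eqn:jtset}, and the resolvability-based covertness analysis (Lemmas~\ref{lem:achvrate}--\ref{lem:close}) goes through with the continuous output alphabet handled by the usual quantization argument. The reliable codebook size is $\log M_k \approx (1-\epsilon)\alpha_k n\gamma_n \cdot \tfrac{g_{kk}^2 P}{2\sigma^2}$, the covertness budget forces $n\gamma_n^2 \cdot \tfrac{P}{2\sigma^2}\sum_j\alpha_j g_{wj}^2 \approx D(\hat Q^n\|Q_0^{\times n})$, and dividing gives the rate. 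For the converse I would mirror Section~\ref{subsec:conv}: the single-letterization bounding $\log M_k \le \sum_i I(X_{ki};Y_{ki}\mid X_{\{\mathcal{K}\setminus k\}i}) + o(\cdot)$ still holds (conditioning on the interfering inputs only shifts the mean of a Gaussian and does not change the capacity), then bound each conditional mutual information of a low-power input over an AWGN channel by $\tfrac{g_{kk}^2}{2\sigma^2}\,\mathbb{E}[X_{ki}^2] + o$, and finally use the Gaussian analogue of \eqref{eqn:relativelower}, namely $D(\hat Q^n\|Q_0^{\times n}) \gtrsim \tfrac{1}{2\sigma^2}\sum_i \mathbb{E}[Z_i - \text{(off mean)}]^2 = \tfrac{1}{2\sigma^2}\sum_i\big(\sum_k g_{wk}\mathbb{E}[X_{ki}]\big)^2$ together with Cauchy--Schwarz exactly as in \eqref{eqn:1col2start}--\eqref{eqn:Cauchy}. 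Setting $\alpha_k = \tfrac{1}{n}\sum_i \mathbb{E}[X_{ki}^2]/(\text{total})$ then reproduces the bound.

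I expect the main obstacle to be making the discretization-and-limit argument rigorous rather than heuristic: one must show that allowing the truly continuous Gaussian input (with its average-power constraint, not a peak constraint) cannot beat the low-power binary scheme, and that the $o(P)$ error terms are uniform enough to survive the $n\to\infty$ then $P\to 0$ (or joint) limit inside the ratio $\log M_k/\sqrt{nD(\hat Q^n\|Q_0^{\times n})}$. The cleanest route is probably to avoid an explicit $P\to 0$ limit and instead note that the covertness constraint \eqref{eqn:covert0} \emph{automatically} drives the empirical second moment $\tfrac1n\sum_i\mathbb{E}[X_{ki}^2]\to 0$ (the analogue of $p_{ki}^{(n)}\to 0$ in the converse), so the first-order Taylor expansions of all relevant $D(\cdot\|\cdot)$ and $\chi^2(\boldsymbol\alpha)$ quantities are valid asymptotically with controlled remainder; the argument that $Q_0=\mathcal{N}(0,\sigma^2)$ is not a convex combination of shifted Gaussians (needed to invoke this) is immediate since shifted Gaussians have strictly larger mean-square. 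A secondary technical point is verifying that treating interference as noise is still \emph{exactly} optimal in the Gaussian converse — this follows because conditioning $Y_{ki}$ on $X_{\{\mathcal{K}\setminus k\}i}$ removes the interference without loss, just as in the discrete case, so no Gaussian-specific subtlety (e.g.\ about the worst-case noise lemma) is actually needed.
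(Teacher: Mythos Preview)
Your proposal has two genuine gaps, one on each side.

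\textbf{Achievability.} The one-sided constellation $\{0,\sqrt{P}\}$ does not achieve the claimed region, and your cancellation claim is incorrect. With $Q_k=\mathcal{N}(g_{wk}\sqrt{P},\sigma^2)$ one computes (for small $P$)
\[
\chi^2(\boldsymbol{\alpha})\;\approx\;\frac{P}{\sigma^2}\Bigl(\sum_{k}\alpha_k g_{wk}\Bigr)^{2},
\]
not $\tfrac{P}{\sigma^2}\sum_k\alpha_k g_{wk}^2$; this is \emph{first} order in $P$, so the ratio $\alpha_k D(W_k^{(k)}\|W_0^{(k)})/\sqrt{\chi^2(\boldsymbol{\alpha})/2}$ scales like $\sqrt{P}$ and carries the wrong combination $(\sum_k\alpha_k g_{wk})$ of warden gains. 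The discretization route can be salvaged only with a \emph{symmetric} constellation $\{-\sqrt{P},0,+\sqrt{P}\}$ and $\beta_{k,+}=\beta_{k,-}=\tfrac12$: then the first-order mean shift at the warden cancels, $\chi^2(\boldsymbol{\alpha},\boldsymbol{B})\approx \tfrac{P^2}{2\sigma^4}\lambda(\boldsymbol{\alpha})^2$ becomes second order in $P$, and the ratio collapses to $\alpha_k g_{kk}^2/\lambda(\boldsymbol{\alpha})$. The paper sidesteps all of this by using zero-mean Gaussian codebooks $\mathcal{N}(0,\alpha_k P)$ directly (no discretization), importing the p-to-p AWGN covert result from \cite{Wang:16}.

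\textbf{Converse.} Your proposed lower bound
\[
D(\hat Q^n\|Q_0^{\times n})\;\gtrsim\;\frac{1}{2\sigma^2}\sum_i\Bigl(\sum_k g_{wk}\,\mathbb{E}[X_{ki}]\Bigr)^{2}
\]
depends on \emph{first} moments and is vacuous against any zero-mean code---in particular the optimal Gaussian one---for which it equals $0$. In the Gaussian setting the warden's detectability comes from the output \emph{variance} increase, not a mean shift. The paper's converse instead lower bounds $D(\bar Q\|Q_0)$ by $\tfrac{\bar P_r}{2\sigma^2}-\tfrac12\log(1+\bar P_r/\sigma^2)\approx \bar P_r^2/(4\sigma^4)$ via maximum entropy, with $\bar P_r=\sum_k g_{wk}^2\bar P_k$ and $\bar P_k=\tfrac1n\sum_i\mathbb{E}[X_{ki}^2]$. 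Combined with $\sum_i I(X_{ki};Y_{ki}\mid X_{\{\mathcal{K}\setminus k\}i})\leq n\bar P_k g_{kk}^2/(2\sigma^2)$ (your numerator bound, which is correct), this gives $\log M_k/\sqrt{nD}\leq \alpha_k g_{kk}^2/\lambda(\boldsymbol{\alpha})$ directly, with no Cauchy--Schwarz step needed.
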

For the achievability, the codebook is generated randomly with a low symbol power (approximately order of $1/\sqrt{n}$) compared to the noise level at the warden, and each Rx $k$ utilizes the TIN scheme. Similar to the case of DM-IC, $\lambda(\boldsymbol{\alpha})$ varies with $\boldsymbol{\alpha}$ in general. However, if the channel to the warden is symmetric in the sense that $g_{w1}= \cdots = g_{wK}$, $\lambda(\boldsymbol{\alpha})$ is fixed and time-division approach is optimal. 
  
We first present a necessary condition on the covertness constraint, which is mainly used to prove the converse part of Theorem \ref{thm:Gaussiancapacity}. Then, we provide the achievability and the converse proofs for Theorem \ref{thm:Gaussiancapacity}.

\subsubsection{A Necessary Condition on the Covertness Constraint}\label{subsubsec:necessary}
By following the same steps with \eqref{eqn:start}-\eqref{eqn:last}, we obtain
\begin{align} \label{eqn:relativesingle}
D(\hat{Q}^n \| Q_{0}^{\times n}) \geq \sum_{i=1}^n D(\hat{Q}_i \| Q_{0}).
\end{align}
In addition, due to the convexity of relative entropy, we have 
\begin{align}
\sum_{i=1}^n D(\hat{Q}_i \| Q_{0}) \geq nD(\bar{Q} \| Q_0),
\end{align}
where $\bar{Q}$ is the distribution averaged over $\hat{Q}_i$ for $i=1,2,\ldots,n$. For a sequence of codes for the covert communication, let us define $P_{ki} \coloneqq \mathbb{E}[X_{ki}^2]$, $\bar{P}_k \coloneqq \frac{1}{n}\sum_{i=1}^n P_{ki}$, and $\bar{P}_{\mathrm{r}} \coloneqq \sum_{k \in \mathcal{K}} g_{wk}^2\bar{P}_k$, i.e., the average received power at the warden. In addition, without loss of generality, let $\bar{P}_k = \alpha_k \bar{P}$, where $\{ \alpha_k \}_{k \in \mathcal{K}} \in [0,1]^K$ and $\sum_{k \in \mathcal{K}}\alpha_k = 1$. Then, by mimicking the steps in \cite[Eq. (74)]{Wang:16}, we have
\begin{align}
\frac{\bar{P}_{\mathrm{r}}}{2\sigma^2} -\frac{1}{2}\log \frac{\bar{P}_{\mathrm{r}} + \sigma^2}{\sigma^2} \leq D(\bar{Q} \| Q_0).
\end{align}
Since $\lim_{n \rightarrow \infty}D(\bar{Q} \| Q_0) = 0$ implies $\lim_{n \rightarrow \infty}\bar{P}_{\mathrm{r}} = 0$, the above inequality yields 
\begin{align}
\frac{\bar{P}_{\mathrm{r}}^2}{4\sigma^4} + o(\bar{P}_{\mathrm{r}}^2) \leq D(\bar{Q} \| Q_0).
\end{align}
By combining the above inequality and \eqref{eqn:relativesingle}, we have
\begin{align}
\bar{P}_{\mathrm{r}}  \leq 2\sigma^2 \sqrt{\frac{D(\hat{Q}^n \| Q_{0}^{\times n})}{n}},
\end{align}
and thus
\begin{align} \label{eqn:Gaussiancore}
\frac{n \bar{P} \lambda(\boldsymbol{\alpha})}{2\sigma^2} \leq \sqrt{nD(\hat{Q}^n \| Q_{0}^{\times n})}.  
\end{align}

\subsubsection{Achievability}
Fix $\boldsymbol{\alpha} \in [0,1]^K$ such that $\sum_{k \in \mathcal{K}}\alpha_k = 1$. In a similar way as in Section \ref{subsubsec:scheme}, each Tx $k$ uses random coding where each of $M_kJ_k$ ($J_k$ is assumed to be sufficiently large) codewords is generated according to the distribution $\mathcal{N}(0,\alpha_k P)$ where $P$ is determined later. Each Rx $k$ decodes its message while TIN. Since $\lim_{n \rightarrow \infty}P$ must be zero for covert communication from \eqref{eqn:covertness} and \eqref{eqn:Gaussiancore}, the interference power at each Rx is negligible compared to that of the background noise when $n$ is sufficiently large. By using the TIN scheme, we can treat this situation as $K$ parallel Gaussian channels \cite[Section 9.4]{Cover:2006} with a common covertness constraint where each noise variance at Rx $k$ is given as $\sigma^2 / g_{kk}^2 + o(n^{-1/2})$. Thus, by applying the result in AWGN channel with a warden \cite[Section V]{Wang:16}, we identify that 
\begin{align}
\frac{\log M_k}{\sqrt{nD(\hat{Q}^n \| Q_{0}^{\times n})}} &= (1 - \epsilon)\frac{n g_{kk}^2\alpha_k P}{2 \sigma^2 \sqrt{nD(\hat{Q}^n \| Q_{0}^{\times n})}}
\end{align}
is achievable for all $k \in \mathcal{K}$ and for an arbitrarily small $\epsilon > 0$ if $P \leq \frac{2\sigma^2}{\lambda(\boldsymbol{\alpha})}\sqrt{\frac{D(\hat{Q}^n \| Q_{0}^{\times n})}{n}}$. By choosing the maximum value of $P$, we complete the achievability proof.

\subsubsection{Converse}
Consider a sequence of codes for covert communication. For the converse proof, we start from \eqref{eqn:convlast}. Then, we obtain
\begin{align}
\sum_{i=1}^{n} I(X_{ki} ; Y_{ki} | X_{\{\mathcal{K} \backslash k \}i }) 
&= \sum_{i=1}^{n} I(X_{ki} ; g_{kk}X_{ki} + N_{ki}) \\
&= \sum_{i=1}^{n} h(g_{kk}X_{ki} + N_{ki}) - h(N_{ki}) \\ \label{eqn:differentialmax}
&\leq \sum_{i=1}^{n} \frac{1}{2} \log \frac{g_{kk}^2 P_{ki} + \sigma^2}{\sigma^2}\\
&\leq \sum_{i=1}^{n} \frac{g_{kk}^2P_{ki}}{2\sigma^2} \\ \label{eqn:Gaussianconvlast}
&= \frac{n \bar{P}g_{kk}^2\alpha_k}{2\sigma^2},
\end{align}
where \eqref{eqn:differentialmax} is because Gaussian distribution maximizes differential entropy under fixed power. By combining \eqref{eqn:Gaussiancore} and \eqref{eqn:Gaussianconvlast}, we have
\begin{align}
\frac{\log M_k}{\sqrt{nD(\hat{Q}^n \| Q_{0}^{\times n})}} \leq \frac{1}{\lambda(\boldsymbol{\alpha})}\alpha_k g_{kk}^2,
\end{align}
for all $k \in \mathcal{K}$, which completes the proof.

\subsection{$J$-Warden}\label{subsec:Kwarden}
Consider a BI DM-IC with $J \geq 2$ non-colluding wardens where each warden $j$ monitors the communication through a $K$-user DM-MAC $(\mathcal{X}_{\mathcal{K}}, V_{Z_j|X_{\mathcal{K}}}, \mathcal{Z}_j)$. The channel output distribution of $\mathbf{Z}_j$ at warden $j$ is written as $Q_{0}^{(j) \times n}$ when no communication occurs, and as $\hat{Q}^{(j)n}$ when communication takes place. We define $Q_{\mathcal{U}}^{(j)}(z) \coloneqq V_{Z_j|X_{\mathcal{K}}}(z|b(\mathcal{U}))$, and if $\mathcal{U} = \{i\}$ for $i\in \mathcal{K}$, we write $Q_{i}^{(j)}$. The absolute continuity described in Section \ref{sec:problem} is assumed with respect to all the wardens. We define the set $\mathcal{J} \coloneqq [1:J]:=\{1,\cdots, J\}$. The covertness constraint is given as 
\begin{align} 
\lim_{n \rightarrow \infty} D ( \hat{Q}^{(j)n} \| Q_{0}^{(j)\times n} ) = 0, \quad \forall j \in \mathcal{J}.
\end{align}     
For simplicity, for a specific coding scheme, we define $D_{\mathrm{max}}^n \coloneqq \max_{j \in \mathcal{J}} D ( \hat{Q}^{(j)n} \| Q_{0}^{(j)\times n})$.
Then, the covert capacity region of the DM-IC with $J$ wardens is formally defined as the following.
  \begin{definition}
We say that a tuple pair $ (R_{\mathcal{K}}^{\mathrm{J}}, L_{\mathcal{K}}^{\mathrm{J}}) \in \mathbb{R}_{+}^{2K}$ is achievable for the $K$-user DM-IC with $J$ wardens if there exists a sequence of codes satisfying the following:
\begin{align}
 \liminf_{n \rightarrow \infty} \frac{\log M_k}{\sqrt{nD_{\mathrm{max}}^n}}  &\geq R_k^{\mathrm{J}}, \quad \forall k \in \mathcal{K},
\end{align}
\begin{align}
\limsup_{n \rightarrow \infty} \frac{\log J_k}{\sqrt{nD_{\mathrm{max}}^n}} \leq L_k^{\mathrm{J}}, \quad \forall k \in \mathcal{K},
\end{align}
\begin{align} \label{eqn:kerror0}
\lim_{n \rightarrow \infty} P_e^n = 0, 
\end{align}
and
\begin{align} \label{eqn:kcovert0}
\lim_{n \rightarrow \infty} D_{\mathrm{max}}^n = 0.
\end{align}
The covert capacity region of the $K$-user DM-IC with $J$ wardens is defined as the closure of the set $ \{ R_{\mathcal{K}}^{\mathrm{J}} \in \mathbb{R}_{+}^{K} : (R_{\mathcal{K}}^{\mathrm{J}},L_{\mathcal{K}}^{\mathrm{J}}) \mbox{ is achievable for some } L_{\mathcal{K}}^{\mathrm{J}} \}$.
\end{definition}

For $\boldsymbol{\alpha} \coloneqq \{ \alpha_k \}_{k \in \mathcal{K}} \in [0,1]^K$ such that $\sum_{k \in \mathcal{K}}\alpha_k = 1$, we define the chi-squared distances between $\sum_{k \in \mathcal{K}} \alpha_k Q_{k}^{(j)}(z)$ and $Q_{0}^{(j)}(z)$ as
\begin{align}
\chi_{j}^2(\boldsymbol{\alpha}) \coloneqq \sum_z \frac{\left( \sum_{k \in \mathcal{K}} \alpha_k Q_{k}^{(j)}(z) - Q_{0}^{(j)}(z) \right)^2}{Q_{0}^{(j)}(z)}.
\end{align}
For brevity, we also define $\chi_{\mathrm{max}}^2(\boldsymbol{\alpha}) \coloneqq \max_{j \in \mathcal{J}} \chi_{j}^2(\boldsymbol{\alpha}).$
Furthermore, we define the channel output distributions at each warden $j$ induced by the input distribution $P_k$ at each Tx $k$ as the following:
\begin{align}  \label{eqn:Kwardendist}
Q_{\boldsymbol{\alpha},\gamma_n}^{(j)}(z) \coloneqq \sum_{x_{\mathcal{K}}} V_{Z_j | X_{\mathcal{K}}}(z_j | x_{\mathcal{K}}) \left( \prod_{k \in \mathcal{K}}P_k(x_k) \right).
\end{align} 
Then, according to Lemma \ref{lem:basic}, the following holds:
\begin{align} 
\frac{\gamma_n^2}{2}(1 - \sqrt{\gamma_n})\chi_j^2(\boldsymbol{\alpha}) \leq D(Q_{\boldsymbol{\alpha},\gamma_n}^{(j)} \| Q_{0}^{(j)}) \leq \frac{\gamma_n^2}{2}(1 + \sqrt{\gamma_n})\chi_j^2(\boldsymbol{\alpha}).
\end{align}
Futhermore, for $(X_{\mathcal{U}}, Z_j) \in \mathcal{X}^{|\mathcal{U}|} \times \mathcal{Z}_j$ for $\mathcal{U} \subseteq \mathcal{K}$ and $|\mathcal{U}| \neq 0$, with the joint distribution $V_{Z_j|X_{\mathcal{U}}}(\prod_{k \in \mathcal{U}}P_k)$, we have
\begin{align}
I(X_{\mathcal{U}} ; Z_j) = \sum_{k \in \mathcal{U}} \alpha_k \gamma_n D(Q_{k}^{(j)} \| Q_{0}^{(j)}) + O(\gamma_n^2).
\end{align}  

The covert capacity region of the $K$-user DM-IC with $J$ wardens is characterized in the following theorem. 
\begin{theorem}\label{thm:Kregion}
For the $K$-user DM-IC with $J$ wardens, the covert capacity region is the set of the rate tuple $R_{\mathcal{K}}^{\mathrm{J}}$ satisfying 
\begin{align} \label{eqn:Kregion}
R_k^{\mathrm{J}} \leq \frac{\alpha_k D(W_{k}^{(k)} \| W_{0}^{(k)})}{\sqrt{\chi_{\mathrm{max}}^2(\boldsymbol{\alpha})/2}}, \quad \forall k \in \mathcal{K}
\end{align}
for some $\boldsymbol{\alpha} \in [0,1]^K$ such that $\sum_{k \in \mathcal{K}}\alpha_k = 1$.
For $R_{\mathcal{K}}^{\mathrm{J}}$ satisfying  \eqref{eqn:Kregion} with equalities, a sufficient and necessary condition on the tuple $L_{\mathcal{K}}^{\mathrm{J}}$ for $(R_{\mathcal{K}}^{\mathrm{J}}, L_{\mathcal{K}}^{\mathrm{J}})$ to be achievable is 
\begin{align} \label{eqn:Kratekeyrate}
L_k^{\mathrm{J}} \geq \frac{\alpha_k [ \max_{j \in \mathcal{J}}D(Q_{k}^{(j)} \| Q_{0}^{(j)}) - D(W_{k}^{(k)} \| W_{0}^{(k)})]^{+}}{\sqrt{\chi_{\mathrm{max}}^2(\boldsymbol{\alpha})/2}} 
\end{align}
for all $k \in \mathcal{K}$. Thus, if $\max_{j \in \mathcal{J}}D(Q_{k}^{(j)} \| Q_{0}^{(j)}) \leq D(W_{k}^{(k)} \| W_{0}^{(k)})$ (i.e., roughly all the channels from Tx $k$ to the wardens are worse than the channel from Tx $k$ to Rx $k$), a secret key between user pair $k$ is unnecessary.  
\end{theorem}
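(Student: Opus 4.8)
The plan is to run the proof of Theorem~\ref{thm:region} separately at each of the $J$ wardens and then to combine the $J$ resulting statements, observing that the wardens interact only through the single quantity $D_{\mathrm{max}}^n=\max_{j\in\mathcal{J}}D(\hat Q^{(j)n}\|Q_0^{(j)\times n})$ that appears in the denominators of every rate and key-rate expression. Because $\mathcal{J}$ is finite, every per-warden estimate used for the single-warden result---Lemma~\ref{lem:basic}, the resolvability lemma \cite[Lemma~3]{MAC}, and Lemma~\ref{lem:close}, each with $Q_k^{(j)},Q_0^{(j)}$ in place of $Q_k,Q_0$---remains available, and a union bound over $\mathcal{J}$ lets us invoke all of them at once. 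The worst warden then governs what is achievable, which is exactly why $\chi^2(\boldsymbol{\alpha})$ is replaced by $\chi_{\mathrm{max}}^2(\boldsymbol{\alpha})$ and $D(Q_k\|Q_0)$ by $\max_{j\in\mathcal{J}}D(Q_k^{(j)}\|Q_0^{(j)})$.

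For achievability I would keep the TIN scheme and the Bernoulli codebook generation with crossover parameter $\alpha_k\gamma_n$ at Tx~$k$, with $\gamma_n\to 0$, $n\gamma_n\to\infty$, $n\gamma_n^2\to 0$. Reliability at each Rx~$k$ does not depend on the number of wardens, so Lemma~\ref{lem:achvrate} still gives $\log M_k=(1-\epsilon)\alpha_k n\gamma_n D(W_k^{(k)}\|W_0^{(k)})$ with $\mathbb{E}[P_e^n]\le e^{-cn\gamma_n}$. For covertness I would set $\log M_kJ_k=\max\bigl\{\log M_k,\,(1+\epsilon)\alpha_k n\gamma_n\max_{j\in\mathcal{J}}D(Q_k^{(j)}\|Q_0^{(j)})\bigr\}$ (the $\max$ with $\log M_k$ is what produces the $[\cdot]^+$ in \eqref{eqn:Kratekeyrate}); this size meets the resolvability requirement of \cite[Lemma~3]{MAC} at \emph{every} warden, so $\mathbb{E}[D(\hat Q^{(j)n}\|Q_{\boldsymbol{\alpha},\gamma_n}^{(j)\times n})]\le e^{-cn\gamma_n}$ for all $j\in\mathcal{J}$. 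A union bound over the finitely many wardens followed by Markov's inequality (as in \eqref{eqn:Markov}) then extracts one deterministic codebook for which, simultaneously for all $j$, $P_e^n\le e^{-c_1 n\gamma_n}$ and $D(\hat Q^{(j)n}\|Q_{\boldsymbol{\alpha},\gamma_n}^{(j)\times n})\le e^{-c_2 n\gamma_n}$. Applying Lemma~\ref{lem:close} and Lemma~\ref{lem:basic} at each warden gives $D(\hat Q^{(j)n}\|Q_0^{(j)\times n})=\frac{n\gamma_n^2}{2}\chi_j^2(\boldsymbol{\alpha})(1+o(1))$, hence $D_{\mathrm{max}}^n=\frac{n\gamma_n^2}{2}\chi_{\mathrm{max}}^2(\boldsymbol{\alpha})(1+o(1))\to 0$; dividing and letting $\epsilon\downarrow 0$ delivers $\frac{\log M_k}{\sqrt{nD_{\mathrm{max}}^n}}\to\frac{\alpha_k D(W_k^{(k)}\|W_0^{(k)})}{\sqrt{\chi_{\mathrm{max}}^2(\boldsymbol{\alpha})/2}}$ and $\frac{\log J_k}{\sqrt{nD_{\mathrm{max}}^n}}\to\frac{\alpha_k[\max_{j\in\mathcal{J}}D(Q_k^{(j)}\|Q_0^{(j)})-D(W_k^{(k)}\|W_0^{(k)})]^+}{\sqrt{\chi_{\mathrm{max}}^2(\boldsymbol{\alpha})/2}}$.

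For the converse I would fix any code with $\lim_n D_{\mathrm{max}}^n=0$. Since $D(\hat Q^{(j)n}\|Q_0^{(j)\times n})\le D_{\mathrm{max}}^n\to 0$, repeating \eqref{eqn:start}--\eqref{eqn:Qlimit} at one warden together with the no-convex-combination assumption (valid for every warden) yields $p_{ki}^{(n)}\to 0$; the empirical fractions $\alpha_k^{(n)}=\frac1n\sum_i\alpha_{ki}$ are properties of the code, not of a warden, so along a common subsequence they converge to a single $\boldsymbol{\alpha}$ serving all $j$. Fixing $j$ and rerunning Section~\ref{subsec:conv} verbatim with $Q_k^{(j)},Q_0^{(j)}$ in place of $Q_k,Q_0$---the codebook bound \eqref{eqn:convcodebook2} is unchanged and \eqref{eqn:relativelower}--\eqref{eqn:Cauchy} produce $\chi_j^2(\boldsymbol{\alpha})$---gives $\liminf_n\frac{\log M_k}{\sqrt{nD(\hat Q^{(j)n}\|Q_0^{(j)\times n})}}\le\frac{\alpha_k D(W_k^{(k)}\|W_0^{(k)})}{\sqrt{\chi_j^2(\boldsymbol{\alpha})/2}}$. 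Since $D_{\mathrm{max}}^n\ge D(\hat Q^{(j)n}\|Q_0^{(j)\times n})$ termwise, replacing the denominator by $D_{\mathrm{max}}^n$ only shrinks the ratio, so minimizing over $j$ gives $\liminf_n\frac{\log M_k}{\sqrt{nD_{\mathrm{max}}^n}}\le\min_{j\in\mathcal{J}}\frac{\alpha_k D(W_k^{(k)}\|W_0^{(k)})}{\sqrt{\chi_j^2(\boldsymbol{\alpha})/2}}=\frac{\alpha_k D(W_k^{(k)}\|W_0^{(k)})}{\sqrt{\chi_{\mathrm{max}}^2(\boldsymbol{\alpha})/2}}$, i.e.\ \eqref{eqn:Kregion}. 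For the key-rate converse I would first use ``rate on the boundary of \eqref{eqn:Kregion}'' to conclude that the above inequality chain is asymptotically tight, which---with the per-warden lower bounds $D(\hat Q^{(j)n}\|Q_0^{(j)\times n})\ge\frac{n\gamma_n^2}{2}\chi_j^2(\boldsymbol{\alpha})(1+o(1))$---pins down $\sqrt{nD_{\mathrm{max}}^n}=n\gamma_n\sqrt{\chi_{\mathrm{max}}^2(\boldsymbol{\alpha})/2}\,(1+o(1))$ and $\log M_k=\alpha_k n\gamma_n D(W_k^{(k)}\|W_0^{(k)})(1+o(1))$. Then the resolvability converse of \cite[Eq.~(116)-(155)]{MAC}, applied at each warden and using the mutual-information expansion of Lemma~\ref{lem:basic}, forces $\log M_kJ_k\ge(1-\epsilon)\alpha_k n\gamma_n D(Q_k^{(j)}\|Q_0^{(j)})$ for every $j$, hence $\log M_kJ_k\ge(1-\epsilon)\alpha_k n\gamma_n\max_{j\in\mathcal{J}}D(Q_k^{(j)}\|Q_0^{(j)})$; writing $\log J_k=\log M_kJ_k-\log M_k$, dividing by $\sqrt{nD_{\mathrm{max}}^n}$, using $\log J_k\ge 0$, and letting $\epsilon\downarrow 0$ gives \eqref{eqn:Kratekeyrate}, from which the final ``Thus'' statement is immediate.

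The routine bulk of this is the single-warden argument run $J$ times; the one genuinely new issue, and the step I expect to be the main obstacle, is the bookkeeping that places the maxima correctly. For the rate bound this is painless because $D_{\mathrm{max}}^n$ dominates each $D(\hat Q^{(j)n}\|Q_0^{(j)\times n})$, so one merely minimizes the $J$ per-warden bounds. The subtler step is the key-rate converse, where one must exploit the boundary assumption to fix the asymptotics $D_{\mathrm{max}}^n\sim\frac{n\gamma_n^2}{2}\chi_{\mathrm{max}}^2(\boldsymbol{\alpha})$ so that the ``$\max_j$'' resolvability requirement on $\log M_kJ_k$ and the rate expression are measured against the same denominator; verifying that this tightening is forced (rather than merely consistent) is where the care is needed.
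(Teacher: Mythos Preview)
Your proposal is correct and follows essentially the same route as the paper: reliability is unchanged, resolvability is run at every warden with the codebook sized to the worst one (the paper packages this as Lemma~\ref{lem:kachvkeyrate}), and a union-bound/Markov argument extracts a single deterministic code; on the converse side the single-warden argument is replayed per warden and then combined via $D_{\mathrm{max}}^n\ge D(\hat Q^{(j)n}\|Q_0^{(j)\times n})$ for the rate bound and via $\log M_kJ_k\ge I(\mathbf{X}_k;\mathbf{Z}_j)$ for all $j$ for the key-rate bound. Your write-up is in fact a bit more explicit than the paper's (the $\max\{\log M_k,\cdot\}$ in the choice of $M_kJ_k$, the common-subsequence remark for $\boldsymbol{\alpha}$, and your flagging of the key-rate denominator bookkeeping), but the substance is the same.
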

The term $\max_{j \in \mathcal{J}}D(Q_{k}^{(j)} \| Q_{0}^{(j)})$ in \eqref{eqn:Kratekeyrate} is because each Tx has to control a codebook size with respect to the covertness constraint (or the channel resolvability) against all the wardens. 

\subsubsection{ Achievability }
The number of wardens does not affect the channel reliability, and thus Lemma \ref{lem:achvrate} still holds. In the channel resolvability perspective, each of the Txs has to control the number of codewords to make the induced channel output distributions at all the wardens as approximately i.i.d. From this fact, Lemma \ref{lem:achvkeyrate} is changed as follows.
\begin{lemma}\label{lem:kachvkeyrate}
Fix $\epsilon \in (0,1)$. When $n$ is sufficiently large, for
\begin{align} \label{eqn:kachvkeyrate}
\log M_kJ_k = (1 + \epsilon)\alpha_k n \gamma_n \max_{j \in \mathcal{J}}D(Q_{k}^{(j)} \| Q_{0}^{(j)}), \quad \forall k \in \mathcal{K},
\end{align}
the relative entropy between $\hat{Q}^{(j)n}$ and $Q_{\boldsymbol{\alpha},\gamma_n}^{(j)\times n}$ averaged over the random codebook ensemble is upper bounded as
\begin{align}
\max_{j \in \mathcal{J}}\mathbb{E} \left[ D(\hat{Q}^{(j)n} \| Q_{\boldsymbol{\alpha},\gamma_n}^{(j)\times n}) \right] \leq e^{-c n \gamma_n},  
\end{align}
for a constant $c > 0$. 
\end{lemma}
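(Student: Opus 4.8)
The plan is to reduce Lemma~\ref{lem:kachvkeyrate} to a $J$-fold application of the single-warden resolvability result already invoked as Lemma~\ref{lem:achvkeyrate} (which is \cite[Lemma 3]{MAC}). The key observation is that the codebook ensemble is common to all wardens: a single random codebook, with each Tx $k$ generating $M_kJ_k$ codewords i.i.d.\ according to $P_k$ as in \eqref{eqPk}, simultaneously induces the output distribution $\hat{Q}^{(j)n}$ at warden $j$ for every $j\in\mathcal{J}$, with target i.i.d.\ distribution $Q_{\boldsymbol{\alpha},\gamma_n}^{(j)\times n}$ defined in \eqref{eqn:Kwardendist}. For a fixed $j$, the channel $V_{Z_j|X_{\mathcal{K}}}$ from the Txs to warden $j$ has exactly the structure of the DM-MAC with a warden in \cite{MAC}, so \cite[Lemma 3]{MAC} applies verbatim: if $\log M_kJ_k \geq (1+\epsilon)\alpha_k n\gamma_n D(Q_k^{(j)}\|Q_0^{(j)})$ for all $k$, then $\mathbb{E}[D(\hat{Q}^{(j)n}\|Q_{\boldsymbol{\alpha},\gamma_n}^{(j)\times n})]\leq e^{-c_j n\gamma_n}$ for some constant $c_j>0$.

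First I would observe that choosing $\log M_kJ_k = (1+\epsilon)\alpha_k n\gamma_n \max_{j\in\mathcal{J}}D(Q_k^{(j)}\|Q_0^{(j)})$ as in \eqref{eqn:kachvkeyrate} makes the codebook at least as large as what each individual warden's resolvability bound requires, i.e.\ $\log M_kJ_k \geq (1+\epsilon)\alpha_k n\gamma_n D(Q_k^{(j)}\|Q_0^{(j)})$ for every $j$ and every $k$. Hence the single-warden conclusion holds for each $j$ with its own rate constant $c_j>0$. Then I would set $c \coloneqq \min_{j\in\mathcal{J}} c_j > 0$, which is a valid positive constant since $J$ is finite; for this $c$ and all $j$, $\mathbb{E}[D(\hat{Q}^{(j)n}\|Q_{\boldsymbol{\alpha},\gamma_n}^{(j)\times n})] \leq e^{-c_j n\gamma_n} \leq e^{-c n\gamma_n}$. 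Taking the maximum over $j$ on the left-hand side preserves the bound, yielding $\max_{j\in\mathcal{J}}\mathbb{E}[D(\hat{Q}^{(j)n}\|Q_{\boldsymbol{\alpha},\gamma_n}^{(j)\times n})] \leq e^{-c n\gamma_n}$, which is the claim.

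One point that deserves a sentence of care is that \cite[Lemma 3]{MAC}, and hence Lemma~\ref{lem:achvkeyrate}, is a statement about the \emph{expectation} over the same random codebook ensemble that is used for all wardens; since the ensemble is identical across $j$, no independence or union-bound argument over wardens is needed at the expectation stage — the minimum-of-constants trick suffices. (The union bound over wardens is instead needed later, when Markov's inequality is applied to extract a single good codebook satisfying the reliability bound and all $J$ covertness bounds simultaneously, analogous to \eqref{eqn:Markov}; that step is not part of this lemma.) I do not anticipate a genuine obstacle here: the only mild subtlety is making explicit that the rate assignment \eqref{eqn:kachvkeyrate} dominates each per-warden requirement, so that the $J$ instances of \cite[Lemma 3]{MAC} are all applicable with the \emph{same} codebook size, after which the result is immediate by taking a minimum of finitely many positive exponents.
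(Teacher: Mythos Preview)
Your proposal is correct and matches the paper's (implicit) approach: the paper does not give a standalone proof of Lemma~\ref{lem:kachvkeyrate} but simply presents it as the $J$-warden modification of Lemma~\ref{lem:achvkeyrate}, which is exactly what you do by invoking \cite[Lemma~3]{MAC} for each warden $j$, noting that the choice \eqref{eqn:kachvkeyrate} dominates every per-warden requirement, and taking $c=\min_{j\in\mathcal{J}}c_j$. Your remark that the expectation is over a common codebook ensemble (so no union bound is needed at this stage) is a helpful clarification that the paper leaves unsaid.
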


By following the similar steps to equations \eqref{eqn:Markov} to \eqref{eqn:resolvability2}, we can verify that there exists a specific code satisfying \eqref{eqn:kerror0}, \eqref{eqn:kcovert0}, \eqref{eqn:achvrate}, \eqref{eqn:kachvkeyrate}, and the following inequalities:   
\begin{align} \label{eqn:kcovertnessbound}
\frac{n\gamma_n^2}{2}(1 - \sqrt{\gamma_n})\chi_j^2(\boldsymbol{\alpha}) - e^{-c' n \gamma_n} \leq D(Q_{\boldsymbol{\alpha},\gamma_n}^{(j)\times n} \| Q_{0}^{(j)\times n}) \nonumber \\
\leq e^{-c' n \gamma_n} +  \frac{n\gamma_n^2}{2}(1 + \sqrt{\gamma_n})\chi_j^2(\boldsymbol{\alpha})
\end{align}
for a constant $c' > 0$. Combining \eqref{eqn:achvrate}, \eqref{eqn:kachvkeyrate}, and \eqref{eqn:kcovertnessbound}, for all $k \in \mathcal{K}$, we obtain 
\begin{align}
\lim_{n \rightarrow \infty} \frac{\log M_k}{\sqrt{n D_{\mathrm{max}}^n}} = \frac{\alpha_k D(W_{k}^{(k)} \| W_{0}^{(k)})}{\sqrt{\chi^2(\boldsymbol{\alpha})/2}}, 
\end{align}
and
\begin{align}
\lim_{n \rightarrow \infty} \frac{\log M_k J_k}{\sqrt{n D_{\mathrm{max}}^n}} = \frac{\alpha_k \max_{j \in \mathcal{J}}D(Q_{k}^{(j)} \| Q_{0}^{(j)})}{\sqrt{\chi_{\mathrm{max}}^2(\boldsymbol{\alpha})/2}}.
\end{align}
In addition, these yield
\begin{align}
&\lim_{n \rightarrow \infty} \frac{\log J_k}{\sqrt{n D_{\mathrm{max}}^n}} \nonumber \\ 
&= \frac{ \alpha_k \left[ \max_{j \in \mathcal{J}}D(Q_{k}^{(j)} \| Q_{0}^{(j)}) -  D(W_{k}^{(k)} \| W_{0}^{(k)}) \right]^{+}}{ \sqrt{\chi_{\mathrm{max}}^2(\boldsymbol{\alpha})/2} },
\end{align}
which ends the achievability proof.

\subsubsection{ Converse }
The converse proof for inequality \eqref{eqn:Kregion} is same with the case of a single warden. Consider the converse proof for inequality \eqref{eqn:Kratekeyrate}. In the presence of $J$ wardens, the lower bound on the number of codewords at each Tx for reliable and covert communication is duplicated to $J$ lower bounds corresponding to $J$ wardens.
. For the proof, we simply change \cite[Eq. (118)]{MAC} by 
\begin{align}
\log M_k J_k \geq I(\mathbf{X}_k ; \mathbf{Z}_j), \quad \forall j \in \mathcal{J}.
\end{align}  
For each $j \in \mathcal{J}$, we follow the same steps to the case of the DM-IC with a warden. Then, for each $k \in \mathcal{K}$, we have
\begin{align}
\limsup_{n \rightarrow \infty}\frac{\log M_kJ_k}{\sqrt{nD_{\mathrm{max}}^n}} \geq \frac{\alpha_k D(Q_{k}^{(j)} \| Q_{0}^{(j)})}{\sqrt{\chi_{\mathrm{max}}^2(\boldsymbol{\alpha})/2}}, \quad \forall j \in \mathcal{J}.
\end{align}
Then, by combining $J$ lower bounds and following the same step with that of the DM-IC with a warden, we end the converse proof for \eqref{eqn:Kratekeyrate}.

\section{Conclusion}\label{sec:concl}
In this paper, we characterized the covert capacity region for $K$-user DM-ICs and Gaussian ICs where the communication is monitored by possibly many wardens. We showed that a p-to-p based scheme with TIN is an optimal strategy. This is because the covertness constraint highly restricts the transmissions of non-zero symbols (in DM-ICs) or the transmit powers  (in Gaussian ICs), and thus the additional channel randomness due to the sparse transmissions of non-zero symbols (or the interfering signals with very low power) is negligible compared to the intrinsic channel randomness present when each Tx keeps silent. For DM-ICs, we showed that if the channel  between a  Tx-Rx pair is better than that between the Tx and the wardens in a a certain way, then the secret key is not necessary between the user pair. 

As mentioned in Remark \ref{rem:absolute}, in the case that $W_{\mathcal{U}}^{(k)} \nll W_{0}^{(k)}$ for some $k \in \mathcal{K}$ and for some $\mathcal{U} \subseteq \mathcal{K}$, the covert capacity region is not characterized. In this case, a more delicate scheme beyond TIN seems to be needed, which would be an interesting further work.

%

\bibliographystyle{IEEEtran}
\bibliography{References}

\end{document}